\documentclass[11pt]{article}
\usepackage[english]{babel}
\usepackage[toc,page]{appendix}
\usepackage{bbm}
\usepackage{fullpage}
\usepackage[top=1in, bottom=1.25in, left=0.9in, right=0.9in]{geometry}
\usepackage{etaremune}
\usepackage{enumitem}

\usepackage{tikz}
\usetikzlibrary{positioning}
\usepackage{witharrows}
\usepackage{amssymb,amsmath,amsthm}

\usepackage{xcolor}
\definecolor{myblue}{RGB}{0,0,255}
\definecolor{myyellow}{RGB}{255,230,80}

\usepackage{hyperref}

\DeclareSymbolFont{rsfs}{U}{rsfs}{m}{n}
\DeclareSymbolFontAlphabet{\mathscrsfs}{rsfs}
\usepackage{mathrsfs}

\usepackage{url}

\hypersetup{
  colorlinks   = true, 
  urlcolor     = blue, 
  linkcolor    = blue, 
  citecolor   = red 
}

\usepackage{caption}

\newtheorem{theorem}{Theorem}[section]
\newtheorem{lemma}[theorem]{Lemma}

\newtheorem{proposition}[theorem]{Proposition}
\newtheorem{corollary}[theorem]{Corollary}

\theoremstyle{definition}
\newtheorem{definition}{Definition}
\newtheorem{remark}[theorem]{Remark}

\newtheorem{problem}{Open problem}

\numberwithin{equation}{section}

\usepackage{times}

\newcommand{\dH}{d_{\mathrm{H}}}
\newcommand{\Ball}{\mathsf{B}}
\newcommand{\dist}{\mathrm{dist}}
\newcommand{\proj}{\Pi}

\newcommand{\bea}{\begin{eqnarray}}
\newcommand{\eea}{\end{eqnarray}}
\newcommand{\<}{\langle}
\renewcommand{\>}{\rangle}

\newcommand{\wt}{\widetilde}

\newcommand\eg{{\text{\eg~}}}

\def\ind{{\mathbbm 1}}

\def\btau{{\boldsymbol{\tau}}}
\def\bsigma{{\boldsymbol{\sigma}}}

\def\bSigma{{\boldsymbol{\Sigma}}}
\def\bSig{{\boldsymbol{\Sigma}}}

\def\bg{{\boldsymbol{g}}}

\def\bx{{\boldsymbol{x}}}

\def\cG{{\mathcal G}}

\def\bsig{{\boldsymbol {\sigma}}}

\def\by{{\boldsymbol y}}

\def\R{{\mathbb R}}

\def\bx{{\boldsymbol{x}}}

\def\bH{\boldsymbol{H}}

\def\bX{\boldsymbol{X}}

\def\<{\langle}
\def\>{\rangle}

\def\Ball{{\sf B}}

\def\cN{{\cal N}}

\def\by{{\boldsymbol{y}}}

\def\b0{{\boldsymbol{0}}}

\def\bG{{\boldsymbol G}}

\def\cS{{\mathcal S}}

\def\star{*}

\renewcommand{\b}{\mathbf{b}}

\def\la{\langle}
\def\ra{\rangle}

\def\bbE{{\mathbb{E}}}

\def\bbN{{\mathbb{N}}}
\def\bbP{{\mathbb{P}}}
\def\bbR{{\mathbb{R}}}

\def\cN{{\mathcal{N}}}

\def\bg{{\mathbf{g}}}

\def\star{*}

\title{
Stable algorithms cannot reliably find isolated perceptron solutions
}
\author{Shuyang Gong\footnote{School of Mathematical Sciences, Peking University. Email: \textit{gongsyprob@gmail.com}.}
\and Brice Huang\footnote{Department of Statistics, Stanford University. Email: \textit{bmhuang@stanford.edu}.}
\and Shuangping Li\footnote{Department of Statistics and Data Science, Yale University. Email: \textit{shuangping.li@yale.edu}.} 
\and Mark Sellke\footnote{Department of Statistics, Harvard University. Email: \textit{msellke@fas.harvard.edu}.}
}
\date{}

\begin{document}

	\maketitle

	\begin{abstract}

We study the binary perceptron, a random constraint satisfaction problem that asks to find a Boolean vector in the intersection of independently chosen random halfspaces. A striking feature of this model is that at every positive constraint density, it is expected that a $1-o_N(1)$ fraction of solutions are \emph{strongly isolated}, i.e. separated from all others by Hamming distance $\Omega(N)$. At the same time, efficient algorithms are known to find solutions at certain positive constraint densities. This raises a natural question: can any isolated solution be algorithmically visible?

We answer this in the negative: no algorithm whose output is stable under a tiny Gaussian resampling of the disorder can \emph{reliably} locate isolated solutions. We show that any stable algorithm has success probability at most $\frac{3\sqrt{17}-9}{4}+o_N(1)\leq 0.84233$. Furthermore, every stable algorithm that finds a solution with probability $1-o_N(1)$ finds an isolated solution with probability $o_N(1)$. The class of stable algorithms we consider includes degree-$D$ polynomials up to $D\leq o(N/\log N)$; under the low-degree heuristic \cite{hopkins2018statistical}, this suggests that locating strongly isolated solutions requires running time $\exp(\widetilde{\Theta}(N))$.

Our proof does not use the overlap gap property. Instead, we show via Pitt's correlation inequality that after a random perturbation of the disorder, the number of solutions located close to a pre-existing isolated solution cannot concentrate at $1$.

\end{abstract}

\section{Introduction}

The perceptron model is a simple one-layer neural network model that has attracted significant recent attention at the interface of statistical physics, probability theory, learning theory, and theoretical computer science. It was introduced by Cover \cite{cover1965geometrical} and later studied extensively, notably by Gardner and Derrida \cite{gardner1988optimal} and by Krauth and M\'ezard \cite{krauth1989storage}. Let $\bg^a \in \mathbb{R}^N$, for $1 \le a \le M$, be i.i.d.\ random patterns. We say that these patterns are \emph{storable} if there exists a vector $\bsig\neq \mathbf 0$ such that $\langle \bg^a,\bsig\rangle \ge 0$ for all $1\le a\le M$. The perceptron may thus be viewed as a random constraint satisfaction problem (CSP), in which one seeks a vector satisfying $M$ random half-space constraints.

There are two widely studied versions of the perceptron model. In the \emph{spherical perceptron}, the vector $\bsig$ is constrained to lie on the unit sphere $\mathbb S^{N-1}$. In the \emph{binary} or \emph{Ising perceptron}, one instead requires $\bsig\in \bSig_N := \{-1,+1\}^N$. For background on the spherical perceptron, we refer the reader to \cite{gardner1988space,shcherbina2003rigorous,talagrand2011mean2,stojnic2013negative,stojnic2013another,el2022algorithmic}. In this paper, we focus on the binary perceptron.

A convenient way to formulate binary perceptron models is through an activation function $U:\mathbb{R}\to\{0,1\}$, where a pattern $\bg^a$ is said to be stored by $\bsig$ if $U(\langle \bg^a,\bsig\rangle/\sqrt{N})=1$. Two choices of particular interest are $U(t)=\mathbf{1}_{\{t\ge \kappa\}}$ and $U(t)=\mathbf{1}_{\{|t|\le \kappa\}}$. These correspond to the \emph{asymmetric binary perceptron} (ABP) and the \emph{symmetric binary perceptron} (SBP), respectively. We describe these models in more detail below.

\subsection{Perceptron models}
We begin with the asymmetric binary perceptron. Let the disorder matrix $\bG=(\bg^1,\dots,\bg^M)\in(\bbR^N)^M$ have i.i.d.\ standard Gaussian entries $\cN(0,1)$, 
and fix a margin $\kappa\in\bbR$. A \emph{solution} to the perceptron problem is a configuration $\bsig\in\bSig_N$ such that $\langle \bg^a,\bsig\rangle\ge \kappa\sqrt N$ for all $1\le a\le M$. Accordingly, the set of all solutions is
\[
\cS(\bG,\kappa)\;=\;\bigcap_{a=1}^M\Bigl\{\bsig\in\bSig_N:\ \langle \bg^a,\bsig\rangle\ge \kappa\sqrt N\Bigr\}.
\]
Throughout, we work in the asymptotic regime $M,N\to\infty$ with $M/N\to\alpha\in(0,\infty)$. The parameter $\alpha$ is referred to as the \emph{constraint density}.

The symmetric binary perceptron model was introduced by Aubin, Perkins, and Zdeborov\'a in \cite{aubin2019storage}. It is defined by replacing the one-sided constraint with the two-sided slab
\begin{equation}
    \mathcal{S}_{\mathtt{SBP}}(\bG,\kappa):=\bigcap_{a=1}^M \Big\{ \bsig\in\bSig_N:\,|\langle\bg^a,\bsig\rangle|\leq \kappa\sqrt{N} \Big\}\,.
\end{equation}
The solution set $\mathcal{S}_{\mathtt{SBP}}(\bG,\kappa)$ is symmetric. This makes the symmetric binary perceptron more amenable to analysis while still preserving many of the key qualitative features of the asymmetric binary perceptron (see the discussion of strong freezing below). 
The model is also closely connected to combinatorial discrepancy theory \cite{spencer1985six,matousek1999geometric}: indeed, it can be viewed as a discrepancy problem in the random proportional regime.

\paragraph{Satisfiability threshold.}
An important question for the perceptron model is the satisfiability threshold, namely the critical constraint density $\alpha_{\mathrm{SAT}}(\kappa)$ at which $\cS(\bG,\kappa)$ transitions from being nonempty with high probability to empty with high probability. For the ABP, following the statistical physics prediction of Krauth and M\'ezard \cite{krauth1989storage}, substantial rigorous progress has been made on this question, including upper bounds in \cite{kim1998covering,altschuler2024note}, a sharp lower bound in \cite{ding2019capacity}, a matching sharp upper bound in \cite{huang2024capacity} (both subject to explicit numerical conditions), sharp-threshold results in \cite{talagrand2011mean2,xu2021sharp,nakajima2023sharp}, and a formula for the free energy at low constraint density \cite{bolthausen2022gardner}. For the SBP, the satisfiability threshold was established in \cite{aubin2019storage,PerkinsXu21,AbbeLiSly2021}. In addition, the model admits a more refined analysis, including contiguity results \cite{AbbeLiSly2021} and detailed studies of the critical window~\cite{altschuler2023critical,sah2023distribution}.

\paragraph{Algorithms.}
Once satisfiability is understood, the next question is algorithmic: whenever solutions exist, is there a polynomial-time algorithm that finds at least one solution? In the statistical physics perspective, this algorithmic question is closely tied to the geometry of the solution space. Roughly speaking, if the set of solutions is poorly connected inside the hypercube, so that typical solutions are far from one another, then one expects algorithms or at least natural local dynamics to have difficulty finding solutions or moving between them. This connection between geometry and algorithms is especially intriguing for perceptron models, and will be directly relevant to the questions studied in this paper.

For the asymmetric binary perceptron, several efficient algorithms are known in certain regimes. Kim and Roche~\cite{kim1998covering} introduced a multistage majority algorithm that succeeds with high probability when $\kappa=0$ and $\alpha<0.005$. More recently, Li, Schramm, and Zhou~\cite{li2025discrepancy} improved this guarantee to $\alpha\le 0.1$ by developing an algorithm based on discrepancy theory, and also obtained sharp algorithmic results in the large-$|\kappa|$ regime. In addition, a number of heuristic message-passing algorithms, particularly those based on belief propagation, have been studied in the physics and machine learning literature; see e.g.\ \cite{braunstein2006learning,baldassi2007efficient,baldassi2009generalization,baldassi2015max}. 
For the symmetric binary perceptron, its close connection to discrepancy minimization leads to efficient algorithms in certain parameter regimes, as well as hardness results in others; see, for example,~\cite{bansal2010constructive,chandrasekaran2014integer,lovett2015constructive,bansal2020line,potukuchi2019spectral,alweiss2021discrepancy,gamarnik2022algorithms,gamarnik2023geometric,MR4928600}.

\paragraph{Strong freezing.}
To further understand the algorithmic question, one is naturally led to the geometry of the solution space. In sparse random CSPs, two fundamental geometric properties of the solution space are \emph{clustering} and \emph{freezing}. Clustering refers to the shattering of the solution space into exponentially many connected components separated by linear distance, a picture formulated in \cite{krzakala2007gibbs} and proved in \cite{achlioptas2008algorithmic}. Freezing means that within a typical cluster, a linear fraction of variables take the same value across all solutions. For example, in random graph coloring, the freezing threshold was established in \cite{molloy2012freezing}; see also \cite{zdeborova2008constraint,zdeborova2008locked,zdeborova2011quiet} for related results. In many such models, including random $k$-SAT, $k$-NAESAT, and random graph coloring, both clustering and freezing occur well before the satisfiability threshold.

Perceptron models are unusual in that clustering and freezing are conjectured to occur at all positive constraint densities $\alpha>0$. Furthermore, it is expected that clustering occurs in a strong sense: \cite{krauth1989storage} conjectured that in the ABP, typical solutions lie in clusters of vanishing entropy density (i.e., containing only $2^{o(N)}$ solutions), with distinct clusters separated by linear Hamming distance. 
This prediction was strengthened in \cite{HuangWongKabashima2013,HuangKabashima2014}, which conjectured that typical solutions are genuinely isolated up to linear Hamming distance.
These predictions were later extended to the SBP in \cite{aubin2019storage,baldassi2020clustering}. 

This picture was rigorously confirmed for the SBP in \cite{PerkinsXu21,AbbeLiSly2021}, where it is shown that, with high probability, all but an $o_N(1)$-fraction of solutions are separated from every other solution by a linear Hamming distance, for every $\alpha>0$. The same behavior is widely believed to hold for the ABP.

\paragraph{Why algorithms may still succeed.}
At first glance, the strong-freezing picture seems difficult to reconcile with the existence of efficient algorithms for finding perceptron solutions at positive density.
Indeed, broadly in disordered search and optimization problems, a long-standing belief from the statistical physics literature is that the geometry of the solution space is closely linked to the computational tractability of the problem.
For example, \cite{krzakala2007gibbs} conjectures that a broad class of Markov chain algorithms sample from the solution space up to the clustering transition and no further, while an influential conjecture of \cite{achlioptas2008algorithmic} suggests that the clustering transition is the threshold beyond which any polynomial-time algorithm cannot find a solution.

However, for sparse random CSPs with bounded typical degree, it has been known since \cite{achlioptas2002almost,zdeborova2007phase} that search algorithms can succeed beyond the clustering transition. This phenomenon is even more striking in random perceptron models: even though strong freezing is expected at all constraint densities $\alpha > 0$ for the ABP (and has been proved for the SBP), many efficient algorithms are nevertheless known to find solutions at some positive $\alpha$. This creates an apparent violation of the heuristic that clustering of the solution space implies hardness for algorithms.

A possible resolution was proposed in \cite{baldassi2015subdominant,baldassi2020clustering}: although a $(1-o_N(1))$-fraction of solutions are strongly frozen, an exponentially small fraction may belong to clusters of exponential size, and efficient algorithms may succeed by finding solutions in these rare but well-connected clusters. To turn this picture into a rigorous explanation, one must establish two complementary facts. First, such rare but well-connected clusters indeed exist. Second, strongly isolated solutions must themselves be hard to find algorithmically. The first direction has seen recent progress; the second is addressed here. Supporting the first direction, Abbe, Li, and Sly \cite{abbe2022binary} proved that, for small $\alpha$, with high probability there exists a cluster of nearly maximal diameter; moreover, such a cluster can be found efficiently by a suitable multiscale majority algorithm in the spirit of \cite{kim1998covering}. As further evidence for this picture, the \emph{branching overlap gap property} (see Subsection~\ref{ss:ogp} below) of Huang and Sellke \cite{huang2025tight} rigorously links algorithmic tractability to the presence of a certain type of dense solution cluster, namely a densely-branching ultrametric tree. Complementary to these results, several recent works further investigate the structure of such atypical solution sets; see, for example, \cite{barbier2024atypical,barbier2025finding,barbier2026low,stojnic2025rare}.

\subsection{Main results: algorithmic hardness of locating isolated solutions}
The rare but well-connected cluster picture described above leaves one last piece of the puzzle: are isolated solutions hard to find algorithmically? In other words, although efficient algorithms may succeed by finding such atypical clusters, do they necessarily avoid the strongly isolated solutions that make up a $1-o_N(1)$ fraction of the solution space? This is the question we address in this paper.

This question arises implicitly in~\cite{PerkinsXu21} from the strong-freezing picture, and was explicitly raised as an open problem in~\cite{el2025hardness}. In \cite{el2025hardness}, it is proved that stable algorithms cannot produce an approximate \emph{sample} from the uniform measure over the solution space of the SBP model at any density $\alpha>0$. But this does not a priori exclude the possibility that some special isolated solutions are algorithmically visible.

In this paper, we answer this question affirmatively. We formulate the obstruction in terms of stability under a small Gaussian resampling of the disorder, which includes polynomial algorithms of degree $o(N/\log N)$; see Corollary~\ref{cor:ld-stable}. Our main results are as follows.
\begin{itemize}
    \item Stable algorithms locate\footnote{Here, ``locating'' a solution means that the algorithm outputs some $\bx\in\mathbb R^N$ that is close to an isolated solution. See Definition~\ref{def:locating}.} a strongly isolated solution with probability at most $\approx 0.84233$ (Theorem~\ref{thm:stable-v1}).
    \item For any stable algorithm, including the algorithm of \cite{kim1998covering}, that locates a solution with probability $1-o_N(1)$, the probability that it locates a strongly isolated solution is $o_N(1)$ (Theorem~\ref{thm:stable-v2}).
\end{itemize}
Similar arguments yield finer-scale $k$-isolation statements and extensions to perceptrons defined by a finite union of intervals, including the SBP; though as we explain, the implications here are less clear.
Our main result shows hardness in the sense that algorithms succeed with probability at most $\approx 0.84233$. Although one would ideally like to show that algorithms succeed with probability $o_N(1)$, we remark that hardness results with a non-vanishing success probability are common in the literature, and are still meaningful evidence of computational intractability. Indeed, in random search and optimization problems, it is expected that in the computationally easy phase, polynomial-time algorithms can succeed with probability very close to $1$. Thus a result that algorithms cannot succeed with probability larger than $1-\Omega(1)$ still suggests computational hardness, and \cite{GJW20,gamarnik2024hardness,wein2022optimal,bresler2022algorithmic,gamarnik2022algorithms,li2025discrepancy} show similar hardness results in related problems. We leave the problem of strengthening the success probability to $o_N(1)$ as an important direction for future work.

We now give the formal definitions and statements of our main results. For $\bsig,\btau\in \bSig_N$, we define the Hamming distance by
\[
\dH(\bsigma,\btau) := \frac12\sum_{i=1}^N |\bsigma_i-\btau_i|\in\{0,1,\dots,N\}.
\]
For $r\ge 0$, we write the Hamming ball $\Ball_{r,\mathrm{H}}(\bsigma):=\{\btau\in\bSigma_N : \dH(\bsigma,\btau)\le r\}$. For each $x\in \mathbb R^N$ and $r\geq 0$, we also define the $\ell_2$ ball $\mathsf{B}_{r,2}(x):=\{y\in \mathbb R^N:\|y-x\|_2\leq r\}$. For a set $T\subseteq\bSigma_N$ and point $\bx\in\R^N$, write
\[
\dist(\bx,T):=\min_{\bsigma\in T}\|\bx-\bsigma\|_2.
\]
(We use the convention that $\dist(\bx,T)=\infty$ when $T=\emptyset$.)

\begin{definition}[Isolated solution]
Fix an integer $k\in\{1,\dots,N\}$.
A solution $\bsigma\in\cS(\bG,\kappa)$ is \emph{$k$-isolated} if
\[
\cS(\bG,\kappa)\cap \Ball_{k,\mathrm{H}}(\bsigma)=\{\bsigma\}.
\]
We write $\mathcal{S}^\circ_k(\bG,\kappa)$ for the set of $k$-isolated solutions. Note that if a solution $\bsig$ is $k$-isolated then it is also $k'$-isolated for any $k'\leq k$. For a fraction $\iota\in(0,1)$, we also write $\mathcal{S}^\circ (\bG,\kappa,\iota):=\mathcal{S}^\circ_{\lfloor \iota N\rfloor}(\bG,\kappa)$.
\end{definition}

To characterize this notion of isolation more precisely, we also use the notion of \emph{margin}, which measures how close a solution is to the boundary of the constraints.
\begin{definition}[Margin]
For $\bsigma\in\bSigma_N$, define
\[
m(\bsigma)=m_{\kappa}(\bsigma;\bG) := \min_{1\le a\le M}\ \frac{\la \bg^a,\bsigma\ra}{\sqrt{N}}-\kappa\in\R.
\]
Then $\bsigma\in\cS(\bG,\kappa)$ if and only if $m(\bsigma)\ge 0$.
\end{definition}

For a resampling (``noise'') parameter $\eta_N\in(0,1)$, define a correlated copy of $\bG$ by
\begin{equation}
\label{eq:noise}
\widetilde 
\bG := \sqrt{1-\eta_N}\,\bG + \sqrt{\eta_N}\,\bG',
\end{equation}
where $\bG'$ is an independent copy of $\bG$.
Then $\widetilde \bG\stackrel{d}{=}\bG$.

Below, we measure stability and success in $\ell_2$. Since we do not require the algorithm output $\mathcal{A}_N(\bG)$ to lie exactly on the hypercube $\bSig_N$, $\ell_2$ is a natural metric. Moreover, when both points lie on $\bSig_N$, $\ell_2$ exactly captures the Hamming distance: for $\btau,\bsig\in\bSig_N$,
\[
\|\bsig-\btau\|_2^2 = 4\, d_{\mathrm H}(\bsig,\btau).
\]
The next definition provides the stability condition.
\begin{definition}[$\ell_2$-stability under resampling]
Fix a sequence $\eta_N\in(0,1)$.
A (possibly randomized) algorithm $\mathcal A_N: \R^{N\times M}\times \Omega\to[-1,1]^N$ is \emph{$(\rho_N,t_N)$-stable at noise level $\eta_N$} if, for $(\bG,\widetilde \bG)$ coupled as in \eqref{eq:noise} and an independent shared random seed $\omega\in\Omega$,
\[
\bbP\big[\|\mathcal A_N(\bG,\omega)-\mathcal A_N(\widetilde \bG,\omega)\|_2 > \rho_N\big] \le t_N.
\]
\end{definition}
Corollary~\ref{cor:ld-stable} shows that degree-$D$ polynomial outputs with bounded second moment satisfy $\bbE\|\mathcal A_N(\bG)-\mathcal A_N(\widetilde \bG)\|_2^2\lesssim D\eta_N N$, so they are stable on the scales used below.

\begin{definition}[Locating solutions in $\ell_2$]
\label{def:locating}
Fix $\iota\in(0,1)$.
We say $\mathcal A_N$ \emph{$\iota N$-locates a solution} if
\[
\dist\big(\mathcal A_N(\bG),\cS(\bG,\kappa)\big) \le \frac{\sqrt{\iota N}}{3}.
\]
We say $\mathcal A_N$ \emph{locates an $\iota N$-isolated solution} if
\[
\dist\big(\mathcal A_N(\bG),\cS^\circ(\bG,\kappa,\iota)\big) \le \frac{\sqrt{\iota N}}{3}.
\]
More generally, for an integer $k$ we say $\mathcal A_N$ \emph{$k$-locates a $k$-isolated solution} if
\[
\dist\big(\mathcal A_N(\bG),\cS^\circ_k(\bG,\kappa)\big) \le \frac{\sqrt{k}}{3}.
\]
\end{definition}

\subsubsection{Stable algorithms cannot find isolated solutions}

Our first result shows that stability alone precludes success probability approaching $1$.
The argument yields an explicit constant.

\begin{theorem}
\label{thm:stable-v1}
Fix any $\kappa\in\R$, $\alpha>0$, and $\iota\in(0,1)$.
Let $\eta_N = o_N(1)$ satisfy
\begin{equation}
\label{eq:eta-condition}
\eta_N = \omega \left( \frac{\log N}{N}\right)\,.
\end{equation}
Let $\mathcal A_N$ be $(\rho_N,t_N)$-stable at noise level $\eta_N$ with
$
\rho_N=o(\sqrt{N}),\,t_N=o_N(1).
$
Then
\[
\bbP\big[\mathcal A_N \text{ locates an $\iota N$-isolated solution of }(\bG,\kappa)\big]
\le \frac{3\sqrt{17}-9}{4}+o_N(1).
\]
\end{theorem}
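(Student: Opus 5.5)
The plan is a two-replica argument in the correlated pair $(\bG,\widetilde\bG)$ of \eqref{eq:noise}, combined with a correlation inequality for the perturbed solution count near a fixed isolated solution. Write $p_N$ for the success probability, $k=\lfloor\iota N\rfloor$, and $E$ (resp.\ $\widetilde E$) for the event that $\mathcal A_N(\bG,\omega)$ (resp.\ $\mathcal A_N(\widetilde\bG,\omega)$) locates an $\iota N$-isolated solution. Since $\widetilde\bG\stackrel{d}{=}\bG$, we have $\bbP[E\cap\widetilde E]\ge 2p_N-1$.

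\textbf{Step 1: geometric consequence of joint success.} On $E\cap\widetilde E$, intersected with the stability event (which costs $t_N=o_N(1)$), there are $\bsig\in\cS^\circ(\bG,\kappa,\iota)$ and $\widetilde\bsig\in\cS^\circ(\widetilde\bG,\kappa,\iota)$ with
\[
\|\bsig-\widetilde\bsig\|_2\le \tfrac{2}{3}\sqrt{\iota N}+\rho_N .
\]
Since $\rho_N=o(\sqrt N)$, this gives $\dH(\bsig,\widetilde\bsig)\le (\tfrac19+o(1))\iota N<k$. Isolation of $\bsig$ for $\bG$ and of $\widetilde\bsig$ for $\widetilde\bG$ forces strong constraints. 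In particular, $\widetilde\bsig$ is the unique $\widetilde\bG$-solution in $\Ball_{k,\mathrm H}(\widetilde\bsig)$, which contains a ball of radius $\approx\tfrac89 k$ around $\bsig$. So, writing
\[
X(\bsig):=|\cS(\widetilde\bG,\kappa)\cap\Ball_{r,\mathrm H}(\bsig)|,
\]
joint success implies $X(\bsig)=1$ at two scales: $r=k/9$ and $r=8k/9$. By the symmetric argument, swapping $\bG$ and $\widetilde\bG$, the same holds with the roles reversed.

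\textbf{Step 2: the count cannot concentrate at $1$.} Condition on $\bG$ and on a fixed isolated solution $\bsig$ of $\bG$. For each $\btau$ near $\bsig$, the event $\{\btau\in\cS(\widetilde\bG,\kappa)\}$ is increasing in the Gaussian vector $\bigl(\la\widetilde\bg^a,\btau\ra\bigr)_{a,\btau}$. Within the ball these coordinates are nonnegatively correlated, since the overlaps are positive. Pitt's inequality then gives, for $\btau,\btau'$ in the ball,
\[
\bbP[\btau,\btau'\in\cS(\widetilde\bG)\mid\bG]\ge\bbP[\btau\in\cS(\widetilde\bG)\mid\bG]\,\bbP[\btau'\in\cS(\widetilde\bG)\mid\bG],
\]
and more generally that survival of $\bsig$ is positively correlated with the appearance of other solutions. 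I would combine this with a first-moment lower bound showing that, because $\eta_N N=\omega(\log N)$, the perturbation moves each constraint by $\gg\sqrt{\log N}/\sqrt N$ relative scale. This scale should make it likely that $\bsig$ itself fails, with the same order of probability that some neighbour at distance $\le k/9$ becomes a solution. Conditioned on the constraint that exactly one solution sits in the ball, I expect to get an inequality of the form $\bbP[X(\bsig)=1]\le c<1$, with a quantitative relation between $\bbP[\bsig\text{ survives}]$ and $\bbP[X\ge2]$.

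\textbf{Step 3: optimization.} Plugging Step 2 into Step 1, and using that the map $\bsig\mapsto\widetilde\bsig$ must be essentially a bijection between isolated solutions found in the two replicas, should yield a quadratic constraint on $p=\lim p_N$. The target constant $(3\sqrt{17}-9)/4$ is the positive root of $2p^2+9p-9=0$, so I would aim to derive $2p^2+9p-9\le o_N(1)$ by bounding $\bbP[E\cap\widetilde E]$ from above in terms of $p$ via Step 2 and from below by $2p-1$. The main obstacle is Step 2: extracting a clean, sharp non-concentration bound from Pitt's inequality. This is delicate because $\bsig$ is selected depending on $\bG$ and is atypical (isolated), so the conditioning must be handled, likely by conditioning only on the constraint values $\la\bg^a,\bsig\ra$ and on the margin $m(\bsig)$, uniformly in these.
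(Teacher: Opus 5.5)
\textbf{Where the proposal falls short.} Your Step 1 is sound and matches the paper's reduction. On joint success plus stability, a $\bG$-measurable region near the located solution contains exactly one $\widetilde\bG$-solution. The paper uses the hypercube points within $\ell_2$-distance $\sqrt{\iota N}/3+\rho_N$ of $\mathcal A_N(\bG)$; your Hamming ball of radius about $k/9$ around $\bsig$ also works.

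However, Step 2, which you yourself flag as the main obstacle, is where the whole proof lives, and you do not supply a mechanism. Pairwise Pitt bounds $\bbP[\btau,\btau'\in\cS(\widetilde\bG,\kappa)\mid\bG]\ge\bbP[\btau\in\cS(\widetilde\bG,\kappa)\mid\bG]\,\bbP[\btau'\in\cS(\widetilde\bG,\kappa)\mid\bG]$ cannot bound $\bbP[X\ge 2]$. The conditional mass may be spread over exponentially many points, each with tiny probability. Likewise, comparing ``survival of $\bsig$'' with ``appearance of neighbours'' gives no quantitative inequality.

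\textbf{The missing mechanism.} Fix a good realization $(\bG,\omega)$ and let $S=\bbP[X=1\mid\bG,\omega]$. Decompose $S=\sum_{\btau}p_\btau$, where $p_\btau$ is the conditional probability that $\btau$ is the \emph{unique} $\widetilde\bG$-solution in the region.
\begin{enumerate}
\item Show that \emph{every} $\btau$ in the region, not just $\bsig$, satisfies $\bbP[\btau\in\cS(\widetilde\bG,\kappa)\mid\bG]\le\frac12+o_N(1)$. For $\bsig$ this follows from the $O(\sqrt{\log N/N})$ margin that isolation forces, together with $\eta_N=\omega(\log N/N)$. For $\btau\neq\bsig$ it follows from a constraint $\btau$ already violates under $\bG$. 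This cap is what prevents a single point from carrying most of the mass.
\item Use the cap to greedily split the region into $\bG$-measurable pieces $\mathsf C_1,\mathsf C_2$, each carrying $p$-mass at least $S/3$.
\item Apply Pitt to the increasing \emph{union} events $F_i=\{\cS(\widetilde\bG,\kappa)\cap\mathsf C_i\neq\emptyset\}$. This gives $\bbP[X\ge 2\mid\bG,\omega]\ge\bbP[F_1\mid\bG,\omega]\,\bbP[F_2\mid\bG,\omega]\ge\frac29S^2$, while trivially $\bbP[X\ge2\mid\bG,\omega]\le 1-S$.
\end{enumerate}
No delicate conditioning is needed. Condition on all of $(\bG,\omega)$: the constraint values then form a non-centered Gaussian vector with covariance $\eta_N q(\btau,\btau')\ind\{a=a'\}\ge 0$, and Pitt applies directly.

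\textbf{Step 3 would not reach the stated constant.} The union bound $\bbP[E\cap\widetilde E]\ge 2p_N-1$ is too lossy. Step 2 naturally gives $\bbP[E\cap\widetilde E]\le (c^\ast+o_N(1))\,p_N$ with $c^\ast=\frac{3\sqrt{17}-9}{4}$. Combined with $2p_N-1$, this only yields $p_N\le 1/(2-c^\ast)+o_N(1)\approx 0.864$.

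The paper instead writes $\bG,\widetilde\bG$ as conditionally i.i.d.\ given a shared Gaussian component $\bG_1$ and $\omega$. Jensen then gives $\bbP[E\cap\widetilde E]\ge p_N^2$, hence $\bbE[\bbP(\widetilde E\mid\bG,\omega)\mid E]\ge p_N$. From this one selects a single realization on which all of the following hold: $E$, the margin event, conditional stability, and $\bbP(\widetilde E\mid\bG,\omega)\ge p_N-o_N(1)$. Then $S\ge p_N-o_N(1)$, and plugging into $1-S\ge\frac29S^2$ gives exactly $2p_N^2+9p_N-9\le o_N(1)$. The ``bijection'' between isolated solutions of the two replicas plays no role.
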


\begin{remark}[Where the constant $\frac{3\sqrt{17}-9}{4}$ comes from]
The constant $\frac{3\sqrt{17}-9}{4}$ is the positive root of 
\[
\frac{2s^2}{9} = 1-s.
\]
It is the threshold at which the Pitt lower bound (quadratic in a partition mass) overtakes the trivial upper bound $1-s$ coming from the event ``there is exactly one nearby solution''. We refer to the technical overview in Subsection~\ref{subsec:proof-overview} for further discussion.
\end{remark}
\begin{remark}[On the stability condition]
This stability condition requires that if the disorder is perturbed at scale $\sqrt{\eta_N}$, then the corresponding outputs change by only $o(\sqrt{N})$ in $\ell_2$ distance. This is a natural and fairly mild requirement, since the $\ell_2$ distance between any two points in $[-1,1]^N$ is at most of order $\sqrt{N}$. As discussed in Section~\ref{subsubsec:high-success-algorithms}, the algorithm of \cite{kim1998covering} satisfies this condition. Moreover, as shown in Section~\ref{subsubsection:low-degree-hardness}, this stability condition can be transferred to hardness for all degree-$o(N/\log N)$ polynomials, which is nearly tight up to a logarithmic factor (this suggests $\exp(\wt\Theta(N))$ running time is required to reliably find a strongly isolated solution, while random guessing takes $\exp(\Theta(N))$ time). It would also be interesting to determine whether such a stability condition holds for the algorithms in \cite{abbe2022binary,li2025discrepancy,bansal2020line}. We expect that all of them are stable in this sense.
\end{remark}

\subsubsection{High-success algorithms avoid isolated solutions}
\label{subsubsec:high-success-algorithms}
Our second theorem shows that if a stable algorithm solves the perceptron with high probability, then the probability that it locates an isolated solution must be negligible.

\begin{theorem}
\label{thm:stable-v2}
Fix $\kappa\in\R$, $\alpha>0$, and $\iota\in(0,1)$.
Let $\eta_N = o_N(1)$ satisfy \eqref{eq:eta-condition}, and let $\mathcal A_N$ be $(\rho_N,t_N)$-stable at noise level $\eta_N$, with $\rho_N=o(\sqrt{N})$ and $t_N=o_N(1)$.
Then for every $\delta>0$, if
\[
\bbP\big[\mathcal A_N \text{ $\iota N$-locates a solution of } (\bG,\kappa)\big] \ge 1-\delta,
\]
then
\[
\bbP\big[\mathcal A_N \text{ locates an $\iota N$-isolated solution of } (\bG,\kappa)\big] \leq o_\delta(1),
\]
where $o_\delta(1)$ denotes a quantity depending on $\delta$ but not on $N$, and satisfying $o_\delta(1)\to 0$ as $\delta\to 0$.
\end{theorem}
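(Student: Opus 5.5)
We plan to reuse the machinery behind Theorem~\ref{thm:stable-v1}, but sharpen the final bookkeeping using the additional hypothesis that the algorithm almost always locates \emph{some} solution. Couple $(\bG,\widetilde\bG)$ as in \eqref{eq:noise} and let $\bx=\mathcal A_N(\bG,\omega)$ and $\widetilde\bx=\mathcal A_N(\widetilde\bG,\omega)$. Let $E$ be the event that $\bx$ lies within $\sqrt{\iota N}/3$ of some $\bsig\in\cS^\circ(\bG,\kappa,\iota)$, and write $p=\bbP[E]$. By stability, $\|\bx-\widetilde\bx\|_2\le\rho_N=o(\sqrt N)$ with probability $1-o_N(1)$. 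By the success hypothesis applied to $\widetilde\bG$ (which has the same law as $\bG$), $\widetilde\bx$ lies within $\sqrt{\iota N}/3$ of some $\widetilde\bsig\in\cS(\widetilde\bG,\kappa)$ with probability at least $1-\delta$. On the intersection of these events with $E$, the triangle inequality gives $\|\bsig-\widetilde\bsig\|_2\le \tfrac23\sqrt{\iota N}+o(\sqrt N)<2\sqrt{\iota N}$, so $d_{\mathrm H}(\bsig,\widetilde\bsig)<\iota N$. Hence
\[
\bbP\big[E\cap\{\cS(\widetilde\bG,\kappa)\cap\Ball_{\iota N,\mathrm H}(\bsig)\neq\emptyset\}\big]\ge p-\delta-o_N(1).
\]
This is the key gain over Theorem~\ref{thm:stable-v1}: there, persistence of a nearby solution had to be extracted from $E$ itself, with loss $1-s$; here it is nearly free, with loss $\delta$.

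The second ingredient is the Pitt-inequality anti-concentration statement from the proof of Theorem~\ref{thm:stable-v1}, which we take as given in the following form. Fix an isolated solution $\bsig$ of $\bG$ (selected measurably from $\bx$). Condition on the event that $\widetilde\bG$ has at least one solution in $\Ball_{\iota N,\mathrm H}(\bsig)$. Then, with conditional probability bounded below by a quantity quadratic in the partition mass, $\widetilde\bG$ has either zero or at least two solutions there. The mechanism is as follows. Isolation forces the only solution of $\bG$ near $\bsig$ to be $\bsig$ itself. The perturbation at $\eta_N=\omega(\log N/N)$ moves each constraint value by an amount that creates many competing candidates $\btau$ near $\bsig$. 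The events $\{\btau\in\cS(\widetilde\bG)\}$ are monotone in the Gaussian resampling and positively correlated by Pitt's inequality, so it cannot be the case that exactly one of them holds with high probability.

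Combining the two ingredients, on most of $E$ the ball around $\bsig$ contains at least one $\widetilde\bG$-solution, which is either $\bsig$ or a ``new'' one. The plan is to show that the algorithm's pairing forces this nearby solution to play the role of $\bsig$ for $\widetilde\bG$. Running the chain $\bG\to\widetilde\bG\to\widetilde{\widetilde\bG}$ symmetrically, isolation is preserved on the event $\widetilde E$ as well. Since the success hypothesis makes all but $\delta$ of the mass locate some solution, the ``exactly one nearby solution'' event has probability at least $p-O(\delta)$. Meanwhile Pitt gives an event of probability at least $c\,p^2$, with $c=2/9$ as in the remark, on which the count is not one. These two events are disjoint within an event of total probability at most $1$, so
\[
c\,p^2\le 1-(1-O(\delta))\,\text{(mass of the ``exactly one'' event)},
\]
which should yield $c\,p^2\le O(\delta)+o_N(1)$ and therefore $p\le O(\sqrt{\delta})=o_\delta(1)$.

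The main obstacle is this final step: matching the Pitt lower bound with the upper bound in a way that replaces the loss $1-s$ from Theorem~\ref{thm:stable-v1} by $\delta$. Concretely, we must verify that success of $\widetilde\bx$, combined with isolation of $\bsig$ on $E$, genuinely puts us in the ``exactly one'' regime up to $\delta$ error, rather than merely guaranteeing ``at least one''. We would first try to rederive the partition-mass inequality from the proof of Theorem~\ref{thm:stable-v1} with $1-s$ replaced by $\bbP[\text{fail}]\le\delta$.
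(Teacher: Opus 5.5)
\textbf{There is a genuine gap, at exactly the step you flagged as the main obstacle.} Replacing $1-s$ by $\delta$ cannot work.

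Success plus stability controls only the event that the candidate region around $\mathcal A_N(\bG)$ contains \emph{no} solution of $\widetilde\bG$. After a Markov step over $(\bG,\omega)$, its conditional probability is at most $\delta^{1/2}+o_N(1)$. It says nothing about how the remaining mass splits between ``exactly one'' and ``at least two'' solutions.

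The ``exactly one'' mass is bounded below only in terms of $p$. It comes from the second run locating an isolated solution, not from the success hypothesis. Pairing it with Pitt's lower bound on ``at least two'' therefore reproduces only the inequality $\frac29 S^2\le 1-S$ of Theorem~\ref{thm:stable-v1}, which does not force $p\to0$. Your displayed inequality would need the ``exactly one'' event to carry mass $1-O(\delta)$. That is false in general: a successful algorithm may land mostly near non-isolated solutions.

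Indeed, consider the law: none with probability $\delta$, exactly one with probability $p$, at least two with probability $1-p-\delta$. It is consistent with every Pitt bound for the increasing events ``$\mathsf C_i$ contains a solution.'' So no argument through the ``at least two'' event can give $o_\delta(1)$. Separately, the ``either zero or at least two'' formulation of your second ingredient is vacuous on the zero side once you condition on at least one solution.

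\textbf{The missing idea is to apply Pitt to the \emph{decreasing} events} $F_i:=\{\cS(\widetilde\bG,\kappa)\cap\mathsf C_i=\emptyset\}$. Let $p_\btau$ be the conditional probability that $\btau$ is the unique solution of $\widetilde\bG$ in the region. The disjoint events counted by $p_\btau$ for $\btau\in\mathsf C_1$ lie inside $F_2$, so $\bbP[F_2\mid\bG,\omega]\ge\sum_{\btau\in\mathsf C_1}p_\btau$, and symmetrically for $F_1$. Pitt then gives
$\bbP[\text{region empty}\mid\bG,\omega]\ge\bbP[F_1\mid\bG,\omega]\,\bbP[F_2\mid\bG,\omega]$.
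This is a polynomial lower bound in $p$, which contradicts the upper bound $\delta^{1/2}+o_N(1)$.

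Since $\sum_\btau p_\btau$ may now be small (only bounded below by a power of $p$), Lemma~\ref{lem:partition-I} no longer applies, because it needs total mass at least $0.8$. You need a case split:
\begin{enumerate}
\item If some $p_\btau\gtrsim p^2$, take $\mathsf C_1=\{\btau\}$ and use $\bbP[\btau\notin\cS(\widetilde\bG,\kappa)\mid\bG]\ge\frac12-o_N(1)$ from Corollary~\ref{cor:noise-destroys}.
\item Otherwise, use the balanced partition of Lemma~\ref{lem:partition-II}.
\end{enumerate}
This yields $p=O(\delta^{1/8})+o_N(1)$.

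Two smaller points need fixing as well.
\begin{enumerate}
\item The region should be the $\ell_2$-ball of radius $\sqrt{\iota N}/3+\rho_N$ around $\mathcal A_N(\bG)$, with Hamming diameter about $\iota N/9$. It should not be $\Ball_{\iota N,\mathrm H}(\bsig)$: its Hamming diameter $2\iota N$ breaks the ``at most one solution'' consequence of isolation, and for $\iota>1/4$ it also breaks the nonnegative overlaps that Pitt requires.
\item Every probability must be conditional on a fixed good realization of $(\bG,\omega)$. Obtain it from Jensen (both runs locate isolated solutions with probability at least $p^2$) together with Markov applied to the success hypothesis.
\end{enumerate}
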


\begin{remark}
    Consider the multi-scale majority algorithm in \cite{kim1998covering}. It is proved in \cite[Theorem~3.8]{gamarnik2022algorithms} that this algorithm is stable in the sense of Theorem~\ref{thm:stable-v2}. Since the algorithm succeeds in finding a solution with high probability, Theorem~\ref{thm:stable-v2} implies that, with high probability, the solution it finds is not in $\cS^\circ(\bG,\kappa,\iota)$.
\end{remark}

\subsubsection{Low-degree consequence}
\label{subsubsection:low-degree-hardness}

\begin{definition}[Degree-$D$ polynomial output with bounded second moment]
\label{def:poly}
Fix $D\in\bbN$.
A \emph{degree-$D$ polynomial output} is a (possibly randomized) map $\mathcal A_N^\circ:\R^{N\times M}\times\Omega\to\R^N$ such that:
\begin{itemize}[leftmargin=2em]
\item For fixed $\omega\in\Omega$, each coordinate of $\mathcal A_N^\circ(\bG,\omega)$ is a polynomial in the entries of $\bG$ of degree at most $D$;
\item there is a constant $C$ (independent of $N$) with $\bbE[\|\mathcal A_N^\circ(\bG)\|_2^2]\le C N$.
\end{itemize}
The second ``random seed'' argument $\omega\in\Omega$ will always be taken independent of $\bG$; it is not crucial but simply makes the framework slightly more general.
We define the associated bounded-output algorithm $\mathcal A_N(\bG):=\proj(\mathcal A_N^\circ(\bG))\in[-1,1]^N$, where $\proj$ is the coordinate-wise projection to $[-1,1]$.
\end{definition}

The following standard stability estimate, proved in Section~\ref{sec:prelim}, transfers our results on stable algorithms to low-degree polynomials.
\begin{corollary}[Degree-$D$ polynomial outputs are stable]
\label{cor:ld-stable}
Let $\mathcal A_N=\proj\circ \mathcal A_N^\circ$ where $\mathcal A_N^\circ$ is degree-$D$ with $\bbE\|\mathcal A_N^\circ(\bG)\|_2^2\le C N$.
For $(\bG,\widetilde \bG)$ coupled as in \eqref{eq:noise} at noise level $\eta$,
\[
\bbE\big[\|\mathcal A_N(\bG)-\mathcal A_N(\widetilde \bG)\|_2^2\big] \le 2 C D\,\eta_N\,N.
\]
Consequently, if $D\eta_N =o(1)$ then $\mathcal A_N$ is $(\rho_N,t_N)$-stable for some $\rho_N=o(\sqrt N)$ and $t_N = o(1)$.
More generally, if $D\eta_N N=o(k)$ then $\rho_N=o(\sqrt{k})$.
\end{corollary}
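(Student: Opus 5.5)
The plan is to use the Gaussian Hermite/Wiener chaos decomposition and then pass from the polynomial output to its projection.

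\textbf{Step 1: stability of the polynomial output.} Fix the seed $\omega$, which is independent of $(\bG,\widetilde\bG)$; we may condition on it and average at the end. Expand each coordinate $f_i:=(\mathcal A_N^\circ)_i(\cdot,\omega)$ in the multivariate Hermite basis of the $NM$ Gaussian entries of $\bG$:
\[
f_i=\sum_{|\beta|\le D}\hat f_i(\beta)H_\beta .
\]
Under the coupling \eqref{eq:noise}, the Ornstein--Uhlenbeck (Mehler) identity gives
\[
\bbE[H_\beta(\bG)H_{\beta'}(\widetilde\bG)]=(1-\eta_N)^{|\beta|/2}\ind_{\beta=\beta'} .
\]
Therefore
\[
\bbE\big[(f_i(\bG)-f_i(\widetilde\bG))^2\big]=2\sum_\beta \hat f_i(\beta)^2\big(1-(1-\eta_N)^{|\beta|/2}\big).
\]
Using $1-(1-\eta)^{d/2}\le d\eta/2\le D\eta$ for $d\le D$, this is at most $2D\eta_N\,\bbE[f_i(\bG)^2]$. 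Summing over $i$ and averaging over $\omega$ yields
\[
\bbE\|\mathcal A^\circ_N(\bG)-\mathcal A^\circ_N(\widetilde\bG)\|_2^2\le 2D\eta_N\,\bbE\|\mathcal A_N^\circ(\bG)\|_2^2\le 2CD\eta_N N.
\]
In fact the bound $D\eta_N$ per coordinate would also suffice.

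\textbf{Step 2: projection.} The map $\proj$ acts coordinatewise by $t\mapsto\max(-1,\min(1,t))$, which is $1$-Lipschitz. Hence $\|\proj(x)-\proj(y)\|_2\le\|x-y\|_2$, and the same bound holds for $\mathcal A_N$.

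\textbf{Step 3: consequences via Markov.} Let $\epsilon_N:=D\eta_N=o(1)$ and set
\[
\rho_N:=(2CD\eta_N N)^{1/2}\epsilon_N^{-1/4}=\sqrt{2C}\,\epsilon_N^{1/4}\sqrt N=o(\sqrt N).
\]
Markov's inequality gives
\[
\bbP\big[\|\mathcal A_N(\bG)-\mathcal A_N(\widetilde\bG)\|_2>\rho_N\big]\le \frac{2CD\eta_N N}{\rho_N^2}=\epsilon_N^{1/2}=:t_N=o(1).
\]
For the general statement, suppose $D\eta_N N=o(k)$, i.e.\ $D\eta_N N=\epsilon_N k$ with $\epsilon_N\to0$. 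Choosing $\rho_N=\sqrt{2C}\,\epsilon_N^{1/4}\sqrt k=o(\sqrt k)$ gives the same $t_N=\epsilon_N^{1/2}$.

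\textbf{Main obstacle.} The only delicate point is the correct form of the noise-operator identity for the coupling \eqref{eq:noise}. The correlation between $\bG$ and $\widetilde\bG$ is $\sqrt{1-\eta_N}$, so degree-$d$ chaos is damped by $(1-\eta_N)^{d/2}$. One must check that this damping gives a factor at most $D\eta_N$, which it does, and more sharply $D\eta_N/2$. Everything else is routine.
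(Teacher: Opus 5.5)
Your proof is correct and follows the paper's route: the noise-stability bound of Proposition~\ref{prop:gjw}, then the nonexpansiveness of $\proj$ (Lemma~\ref{lem:clip}), then Markov's inequality; the only difference is that you re-derive the cited Gamarnik--Jagannath--Wein estimate from the Hermite/Mehler expansion instead of quoting it. One minor slip: $1-(1-\eta)^{d/2}\le d\eta/2$ fails at $d=1$ (since $1-\sqrt{1-\eta}>\eta/2$), but $1-\sqrt{1-\eta}\le\eta\le D\eta$ still gives your bound $2D\eta_N\,\bbE[f_i(\bG)^2]$, so only your asides about the sharper per-coordinate factor $D\eta_N$ need $D\ge 2$.
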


Corollary~\ref{cor:ld-stable} gives $(\rho_N,t_N)$-stability with $\rho_N=o(\sqrt N)$ whenever $D_N\eta_N=o(1)$. Combining this with Theorem~\ref{thm:stable-v1}, one obtains the following consequence. If $D_N=o\!\left(\frac{N}{\log N}\right)$, $\mathcal A_N^\circ$ is a degree-$D_N$ polynomial output in the sense of Definition~\ref{def:poly}, and $\mathcal A_N=\proj\circ \mathcal A_N^\circ$, then
\[
\bbP\big[\mathcal A_N \text{ locates an $\iota N$-isolated solution of }(\bG,\kappa)\big]
\le \frac{3\sqrt{17}-9}{4}+o_N(1).
\]

\begin{remark}
    Under the low-degree hypothesis \cite{hopkins2018statistical}, the failure of degree-$\frac{N}{\log N}$ polynomials provides evidence that an exponential running time of order $\exp\big(\frac{N}{\mathrm{polylog}N}\big)=\exp(N^{1-o_N(1)})$ is necessary for an algorithm to locate an $\iota N$-isolated solution of $(\bG,\kappa)$ with high probability. Note that random guessing takes $\exp(\Theta(N))$ time to find a solution. Therefore, the power of $N$ in the exponent is tight.
\end{remark}

\begin{remark}
    Although we consider a fixed $\kappa$ for convenience, if each constraint $\bg^a$ is assigned a uniformly bounded threshold $\kappa_a\in [-C_0,C_0]$ independently of the disorder, none of our arguments change.
    One can also reverse the sign of some constraints, i.e. require $\la \bg^a,\bsig\ra\leq\kappa_a\sqrt{N}$ for $a\in A\subseteq [M]$; this is equivalent to negating $\bg^a$ which does not change the problem since $\bg^a\stackrel{d}{=}-\bg^a$.
\end{remark}

We also record variants on subextensive Hamming scales and, in particular, for perceptrons defined by a finite union of intervals. When $k_N=o(N)$, the interpretation of ``$k_N$-locating a $k_N$-isolated solution'' is less canonical than in the macroscopic setting, since one could imagine following an $o(N)$-accurate output by a simple clean-up phase that corrects the remaining coordinates. In the finite-union-of-interval setting, the argument below applies only in the regime $k_N\le \sqrt{N}/(\log N)^2$.

With this caveat, we have the following scale-dependent statement.
\begin{theorem}
\label{thm:k-isolation}
Fix $\kappa\in\R$, $\alpha>0$, and let $k=k_N$ be an integer sequence with $k_N\to\infty$ and $k_N\le N/10$.
Let $\eta_N \leq \frac{(\log N)^{1.1}}{N}$ satisfy \eqref{eq:eta-condition}.
Let $\mathcal A_N:\R^{N\times M}\times \Omega\to[-1,1]^N$ be $(\rho_N,t_N)$-stable at noise level $\eta_N$ with
$
\rho_N=o(\sqrt{k_N}),t_N=o_N(1).
$
Then
\[
\bbP\big[\mathcal A_N \text{ $k_N$-locates a $k_N$-isolated solution of }(\bG,\kappa)\big]
\le \frac{3\sqrt{17}-9}{4}+o_N(1).
\]
\end{theorem}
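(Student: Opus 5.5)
We will follow the same scheme as Theorem~\ref{thm:stable-v1}, now at Hamming scale $k=k_N$. Let $E$ be the event that $\mathcal A_N(\bG)$ $k$-locates a $k$-isolated solution, and write $p:=\bbP[E]$. On $E$ let $\bsig^\ast$ be the nearest $k$-isolated solution, so $\|\mathcal A_N(\bG)-\bsig^\ast\|_2\le\sqrt k/3$. Since distinct points of $\bSig_N$ are at $\ell_2$ distance at least $2$, $\bsig^\ast$ is determined by the output as soon as $k$ is large. The proof then has three steps.

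\emph{Step 1 (stability transfer).} Couple $(\bG,\widetilde\bG)$ as in \eqref{eq:noise}, so $E$ and $\widetilde E$ each have probability $p$. Stability with $\rho_N=o(\sqrt k)$ gives $\|\mathcal A_N(\bG)-\mathcal A_N(\widetilde\bG)\|_2=o(\sqrt k)$ with probability $1-o(1)$. On $E\cap\widetilde E$ the two located solutions $\bsig^\ast$ and $\widetilde\bsig^\ast$ then satisfy $\|\bsig^\ast-\widetilde\bsig^\ast\|_2\le 2\sqrt k/3+o(\sqrt k)$, hence $\dH(\bsig^\ast,\widetilde\bsig^\ast)\le k/9+o(k)<k$. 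So $\bsig^\ast$ is a $k$-isolated solution of $\bG$ with a solution of $\widetilde\bG$ inside $\Ball_{k/9+o(k),\mathrm H}(\bsig^\ast)$, and that solution is itself $k$-isolated for $\widetilde\bG$.

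\emph{Step 2 (counting nearby solutions).} Let $Z$ be the number of solutions of $\widetilde\bG$ in $\Ball_{k/4,\mathrm H}(\bsig^\ast)$. The goal is to show that, conditionally on $\bG$ and on $E$, the law of $Z$ cannot concentrate on $\{1\}$. Concretely:
\begin{enumerate}
\item the event $\{Z\ge 1\}$ has conditional probability close to $\bbP[\widetilde E\mid E]$;
\item the event $\{Z\ge 2\}$ has mass at least quadratic in the relevant partition mass.
\end{enumerate}
Step 1 forces $Z\ge 1$ on $E\cap\widetilde E$. Point (i) follows because, by $k$-isolation of the located $\widetilde\bG$-solution, at most one solution can sit near the output. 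For (ii) we use Pitt's inequality: conditionally on $\bG$, the indicators that various $\btau$ near $\bsig^\ast$ solve $\widetilde\bG$ are increasing functions of positively correlated Gaussian constraint values. We partition the $\widetilde\bG$-solutions near $\bsig^\ast$ into classes, for instance by which coordinate flips relative to $\bsig^\ast$, and use the symmetry of the coordinates. Pitt's inequality then lower bounds the probability that two disjoint classes are simultaneously occupied by the product of their masses. If the "found" event has mass $s$, roughly $s\approx p$, this gives a lower bound of $2s^2/9$ on $\{Z\ge2\}$. On the other hand, $\{Z\ge 2\}$ is incompatible with $\widetilde E$ up to the event "exactly one nearby solution", whose mass is at most $1-s$. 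Combining, $2s^2/9\le 1-s+o(1)$, which yields $s\le\frac{3\sqrt{17}-9}{4}+o(1)$.

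\emph{Step 3 (scale-specific input, the main obstacle).} We must verify the nondegeneracy inputs at scale $k$ and with $\eta_N\le(\log N)^{1.1}/N$. The key need is that the perturbation is large enough to move a solution's margin across the boundary with probability bounded away from $0$ and $1$ for $\Theta(1)$ of the relevant neighbors, yet small enough that only $O(1)$-effect changes occur within distance $k$. Flipping $j\le k$ coordinates changes $\la\bg^a,\bsig\ra/\sqrt N$ by $O(j/\sqrt N)$ in the typical direction. The resampling moves each constraint by about $\sqrt{\eta_N}$, so over the $M$ constraints the minimum margin fluctuates on scale $\sqrt{\eta_N}/\sqrt{\log N}$. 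Condition \eqref{eq:eta-condition} makes this larger than $1/\sqrt N$, which is the scale of the margin gaps. The upper bound $(\log N)^{1.1}/N$ keeps the change local: the number of constraints that become tight is $O(\mathrm{polylog})$, so solutions created by the perturbation lie within $o(k)$ Hamming distance once $k\to\infty$.

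I expect the hardest part to be the anti-concentration of margins at this finer scale: showing that with non-vanishing probability a near neighbor within distance $\ll k$ becomes a solution. I would first try to do this by controlling the joint law of the smallest margins via a Poisson-type approximation for the near-tight constraints, and then apply Pitt to the Gaussian vector $(\la\widetilde\bg^a,\btau\ra)_{a,\btau}$.
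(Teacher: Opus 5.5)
Your Steps 1--2 have the right skeleton. The paper proves this theorem by running the proof of Theorem~\ref{thm:stable-v1} verbatim with $\iota N$ replaced by $k_N$. However, the step that produces $2s^2/9$ is not justified, and Step~3 looks for the missing ingredient in the wrong place.

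To get $\bbP[Z\ge 2\mid\bG,\omega]\ge 2s^2/9$ from Pitt's inequality, you must split the event $\{Z=1\}$ according to where the unique solution sits. This gives two classes $\mathsf C_1,\mathsf C_2$ with masses $\sum_{\btau\in\mathsf C_i}p_\btau$, where $p_\btau:=\bbP[\btau\in\cS(\wt\bG,\kappa),\,Z=1\mid\bG,\omega]$, and both masses must lie in $[s/3,2s/3]$. ``Symmetry of the coordinates'' gives nothing here: Pitt is applied conditionally on $\bG$, and conditionally on $\bG$ no symmetry survives. More importantly, no balanced split exists if a single configuration carries more than $2s/3$ of the mass. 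For instance, $\bsig^\ast$ itself might a priori remain the unique nearby solution with probability close to $s$.

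The missing input is the \emph{upper} bound $\bbP[\btau\in\cS(\wt\bG,\kappa)\mid\bG]\le\frac12+o_N(1)$, uniformly over $\btau\in\Ball_{k,\mathrm H}(\bsig^\ast)$ (Corollary~\ref{cor:noise-destroys}). It is elementary:
\begin{itemize}
\item Every single-bit flip of a $k$-isolated $\bsig^\ast$ violates some constraint, so $m(\bsig^\ast)\le 2\|\bG\|_\infty/\sqrt N=O(\sqrt{\log N/N})$ (Proposition~\ref{prop:isolation-implies-low-margin}).
\item After resampling, the tight constraint value is Gaussian with standard deviation $\sqrt{\eta_N}$. Because $\eta_N=\omega(\log N/N)$, its mean is within $o(\sqrt{\eta_N})$ of the threshold, so it is satisfied with probability at most $\frac12+o_N(1)$.
\item Every other $\btau$ in the ball already violates a constraint of $\bG$, and the same computation applies.
\end{itemize}
Given this, and assuming without loss of generality $s\ge 0.8$, the greedy intermediate-value split of Lemma~\ref{lem:partition-I} produces the balanced partition. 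Pitt's inequality then applies to $(\la\wt\bg^a,\btau\ra/\sqrt N)_{a,\btau}$, whose conditional covariances $\eta_N q(\btau,\btau')\ind\{a=a'\}$ are nonnegative since $k\le N/10$. This gives $\bbP[E_1\cap E_2\mid\bG,\omega]\ge 2s^2/9$.

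Consequently, the ``main obstacle'' of your Step~3 is not an obstacle. The argument never needs a neighbour to \emph{become} a solution with non-vanishing probability, only that no single configuration is \emph{likely} to be one. No anti-concentration of margins, Poisson approximation, or locality of newly created solutions is required. The upper bound $\eta_N\le(\log N)^{1.1}/N$ plays no role beyond implying $\eta_N=o(1)$. Your locality heuristic is also doubtful: in the planted picture a solution has order $N\sqrt{\eta_N}$ constraints within $\sqrt{\eta_N}$ of the threshold, i.e.\ $\sqrt N$ up to logarithms, not polylogarithmically many.

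A secondary omission is the justification of $s\approx p$. One needs $\bbP[E\cap\wt E]\ge p^2$, obtained by writing $\bG,\wt\bG$ with a shared component $\bG_1$ and applying Jensen conditionally on $(\bG_1,\omega)$. One then fixes a realization $(\bG,\omega)$ on which the following hold simultaneously:
\begin{itemize}
\item $E$;
\item the margin event;
\item $\bbP(\wt E\mid\bG,\omega)\ge p-o_N(1)$;
\item $\bbP(\mathcal G_{\mathrm{stab}}\mid\bG,\omega)\ge 1-o_N(1)$.
\end{itemize}
The trivial bound $\bbP[E\cap\wt E]\ge 2p-1$ would only give a weaker constant.
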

Corollary~\ref{cor:ld-stable} yields the same conclusion for degree-$D_N$ polynomial outputs with $D_N=o\!\left(\frac{k_N}{\log N}\right)$ and bounded second moment. In particular, degree-$\mathrm{polylog}(N)$ polynomial outputs cannot locate $\mathrm{polylog}(N)$-isolated solutions with success probability exceeding $\frac{3\sqrt{17}-9}{4}+o_N(1)$.

Next, for any $\mathcal U\subset \mathbb R$, we define a generalized set of solutions to be 
\begin{equation}
    \label{eq:sol-space-SBP}
    \mathcal{S}(\bG,\mathcal{U}):=\bigcap_{a=1}^M \Big\{\bsig\in\bSig_N:\, \langle \bg^a,\bsig\rangle/\sqrt{N}\in \mathcal{U}\Big\}.
\end{equation}
Similarly, we define the set of isolated solutions $\mathcal{S}^\circ_k(\bG,\mathcal{U})$ and $\mathcal{S}^\circ(\bG,\mathcal{U},\iota)$. In this subsection, we assume that $\mathcal{U}=\sqcup_{i=1}^L\mathcal{I}_i$, where $L$ is a finite number and each $\mathcal{I}_i$ is an interval.
For such perceptrons one obtains the following local analogue.
\begin{theorem}
    \label{thm:stable-SBP}
    Fix $\alpha>0$ and $\mathcal U$ as above, and let $k_N$ be an integer with $k_N\to\infty$ and $k_N\leq \sqrt{N}/(\log N)^2$. Let  $\eta_N \leq \frac{(\log N)^{1.1}}N$ satisfy \eqref{eq:eta-condition}. Let $\mathcal{A}_N:\mathbb R^{N\times M}\times\Omega\to[-1,1]^N$ be $(\rho_N,t_N)$-stable at noise level $\eta_N$ with 
    $\rho_N=o(\sqrt{k_N})\,, t_N=o_N(1)$.
    Then
    \begin{equation}
    \label{eq:conc-thm-1.10}
        \mathbb P[\mathcal{A}_N \text{ $k_N$-locates a $k_N$-isolated solution of }(\bG,\mathcal{U})]\leq \frac{3\sqrt{17}-9}{4}+o_N(1)\,.
    \end{equation}
\end{theorem}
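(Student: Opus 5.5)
We follow the template described for Theorem~\ref{thm:stable-v1} and Theorem~\ref{thm:k-isolation}, adapted to a finite union of intervals $\mathcal U=\sqcup_{i\le L}\mathcal I_i$. Suppose the success probability is $p$, and couple $(\bG,\widetilde\bG)$ as in \eqref{eq:noise}. By stability, with probability $1-o(1)$ we have $\|\mathcal A_N(\bG)-\mathcal A_N(\widetilde\bG)\|_2=o(\sqrt{k_N})$. Hence on the event that both runs succeed, which has probability at least $2p-1-o(1)$, the two located $k_N$-isolated solutions $\bsig$ (for $\bG$) and $\widetilde\bsig$ (for $\widetilde\bG$) satisfy $\|\bsig-\widetilde\bsig\|_2\le 2\sqrt{k_N}/3+o(\sqrt{k_N})$. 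Therefore $\dH(\bsig,\widetilde\bsig)\le k_N/9+o(k_N)<k_N$.

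We then fix a $k_N$-isolated solution $\bsig$ of $\bG$ and study the random variable $Z$. This is the number of $\widetilde\bG$-solutions in the small Hamming ball $\Ball_{r,\mathrm H}(\bsig)$, where $r\approx k_N/9$. The plan has two halves.

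\textbf{Upper bound.} The event in which the algorithm succeeds on both instances forces $\widetilde\bsig$ to be the unique $\widetilde\bG$-solution in a ball of radius $k_N$ around it. This ball contains $\Ball_{r,\mathrm H}(\bsig)$. So on this event $Z=1$, and success on the pair implies $Z=1$.

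\textbf{Lower bound.} We show that, conditionally on $\bG$ and averaged over the location, the distribution of $Z$ cannot concentrate at $1$. In particular, $\bbP[Z=0]$ or $\bbP[Z\ge2]$ is bounded below by a quadratic expression $2s^2/9$ in the mass $s$ of a suitable partition. This is the Pitt correlation inequality step. Conditionally on $\bG$, each constraint $a$ is resampled by Gaussian noise of variance $\eta_N$, independently across $a$. Writing $\widetilde{\bg}^a$ projected onto the $\pm$ flips in the ball, the indicator events "$\langle\widetilde\bg^a,\btau\rangle/\sqrt N\in\mathcal U$" for nearby $\btau$ are events about a jointly Gaussian vector with positive correlations. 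Interval events are not monotone, however, so for the SBP-type set we must proceed more carefully. We localize: with $\eta_N\le(\log N)^{1.1}/N$ and $k_N\le\sqrt N/(\log N)^2$, the perturbations $\langle\bg^a,\btau-\bsig\rangle/\sqrt N$ and the noise $\sqrt{\eta_N}\,\langle\bg'^a,\btau\rangle/\sqrt N$ are both $O(k_N\sqrt{\log N}/\sqrt N)+O(\sqrt{\eta_N\log N})$, uniformly in $a$ and $\btau$. This is tiny compared to interval lengths. Consequently, only constraints whose value for $\bsig$ lies within this window of some endpoint of $\mathcal U$ can change status. Near each such endpoint the constraint is locally one-sided, i.e.\ a half-line event: it is increasing or decreasing in the Gaussian coordinate, and the sign can be absorbed by negating $\bg^a$, as in the remark on reversed constraints. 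After this reduction every relevant event is monotone in the same direction, and Pitt's inequality applies exactly as in the ABP case. This yields positive correlation of "$\btau$ survives" and "$\btau'$ survives", and hence the quadratic lower bound.

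\textbf{Combining.} We average over the partition parameter $s$, and use that the located solution occurs with probability $\ge p$. The algorithm's success then forces $1-s\ge$ (success probability), while Pitt forces $2s^2/9\le$ failure mass. Solving gives $p\le\frac{3\sqrt{17}-9}{4}+o_N(1)$, exactly as in Theorem~\ref{thm:stable-v1}.

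The main obstacle is the localization step. We must verify, with a union bound over $M$ constraints and all $\btau\in\Ball_{k_N,\mathrm H}(\bsig)$, that no constraint lies simultaneously near two endpoints of $\mathcal U$. We must also verify that the non-monotone interior region never matters. This uniform control is what forces the restrictive scale $k_N\le\sqrt N/(\log N)^2$. Indeed, $|\Ball_{k_N,\mathrm H}|=\exp(O(k_N\log N))$ must be beaten by Gaussian tails at the scale $k_N/\sqrt N$, and this is where the $(\log N)^2$ margin will be spent.
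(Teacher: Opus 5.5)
There is a genuine gap: two ingredients needed for the stated bound are missing or wrong. Your localization step matches the paper: each near-boundary constraint has a single active endpoint on the whole ball, signs are absorbed by negating rows of $\bG'$, and then Pitt applies. Centering the ball at $\bsig$ with radius $\approx k_N/9$ is a harmless variant of the paper's ball around $\mathcal A_N(\bG)$.

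\textbf{The partition is never constructed.} Pitt only yields a contradiction if the ball splits as $\mathsf C_1\sqcup\mathsf C_2$ with both $\bbP[\mathcal S(\wt\bG,\mathcal U)\cap\mathsf C_i\neq\emptyset\mid\bG,\omega]$ at least a constant fraction of $S=\bbP[Z=1\mid\bG,\omega]$. The paper gets this from a fragility bound: every $\btau$ in the ball has $\bbP[\btau\in\mathcal S(\wt\bG,\mathcal U)\mid\bG]\le\frac12+o_N(1)$. The greedy split of Lemma~\ref{lem:partition-I} then gives masses in $[S/3,2S/3]$ and product at least $2S^2/9$. This bound needs three facts you never state.
(a) Isolation forces small margin. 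Each single-bit flip of $\bsig$ must push some constraint out of $\mathcal U$, so some $\dist(\langle\bg^a,\bsig\rangle/\sqrt N,\R\setminus\mathcal U)\le 2\|\bG\|_\infty/\sqrt N=O(\sqrt{\log N/N})$.
(b) Since $\eta_N=\omega(\log N/N)$, the resampling noise $\sqrt{\eta_N}\,\mathcal N(0,1)$ on that constraint swamps this margin.
(c) Every $\btau\neq\bsig$ in the ball already violates a constraint under $\bG$, and re-enters $\mathcal U$ with probability at most $\frac12+o_N(1)$.
Your proposal never uses the hypothesis $\eta_N=\omega(\log N/N)$, and the argument cannot work without it. If $\bsig$ survived the perturbation with probability tending to $1$, all mass would sit on one point and no balanced partition would exist. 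For extremely small $\eta_N$ the solution set essentially does not move, so exhaustive search is stable.
Also, Pitt must be applied to the union events $E_i=\{\mathcal S(\wt\bG,\mathcal U)\cap\mathsf C_i\neq\emptyset\}$, not merely to pairs of single-configuration survival events. These are monotone only through the one-sided surrogate events, which agree with $E_i$ off a noise event of probability $o_N(1)$.

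\textbf{The conditioning step does not reach the stated constant.} The partition depends on $\bG$, so Pitt gives only the per-instance inequality $1-S\ge 2S^2/9-o_N(1)$ at a fixed good $(\bG,\omega)$, i.e.\ $S\le s^*+o_N(1)$ with $s^*:=\frac{3\sqrt{17}-9}{4}$. To conclude $p\le s^*+o_N(1)$ you need realizations where $\bG$ succeeds and the conditional success probability on $\wt\bG$ is at least $p-o_N(1)$.
Your union bound (both runs succeed with probability at least $2p-1$) does not supply this. Averaging $S\le s^*$ against it only gives $2p-1\le p\,s^*+o_N(1)$, i.e.\ $p\le 1/(2-s^*)\approx 0.864$.
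The paper instead writes $\bG=\sqrt{1-\lambda_N}\bG_1+\sqrt{\lambda_N}\bG_2$ and $\wt\bG=\sqrt{1-\lambda_N}\bG_1+\sqrt{\lambda_N}\bG_3$. The two success events are then conditionally independent given $(\bG_1,\omega)$, and Jensen gives $\bbP[\mathrm{Iso}(\bG)\cap\mathrm{Iso}(\wt\bG)]\ge p^2$. Hence the conditional success probability on $\wt\bG$ is at least $p$ on average over $\{\bG\text{ succeeds}\}$.
Your ``Combining'' paragraph, asserting $1-s\ge$ success probability, is not a valid inequality. The correct pair is $S\ge p-o_N(1)$ and $1-S\ge\bbP[E_1\cap E_2\mid\bG,\omega]\ge 2S^2/9-o_N(1)$.

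\textbf{Minor point on the scale restriction.} The condition $k_N\le\sqrt N/(\log N)^2$ is driven by the deterministic shift $|\langle\bg^c,\btau-\bsig\rangle|/\sqrt N\le 2k_N\|\bG\|_\infty/\sqrt N$. This must be small relative to the window around each endpoint. A union bound over the ball is not what forces it.
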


The same argument also yields an analogue of Theorem~\ref{thm:stable-v2} for perceptrons defined by a finite union of intervals, which we omit here. Additionally, Corollary~\ref{cor:ld-stable} implies that if $D_N=o\!\left(\frac{k_N}{\log N}\right)$ and $k_N\le \sqrt{N}/(\log N)^2$, then degree-$D_N$ polynomial outputs with bounded second moment satisfy \eqref{eq:conc-thm-1.10}.

\subsection{Related work}
Being concerned with algorithmic hardness, our result is related to two major approaches to such questions: the \emph{overlap gap property (OGP)} \cite{Garmarnik21} and the \emph{low-degree polynomial framework} \cite{wein2025computational}.

\subsubsection{Overlap gap property (OGP)}
\label{ss:ogp}

There is a long history of work on the solution geometry of disordered problems \cite{mezard1984nature,mezard2005clustering,achlioptas2006solution,krzakala2007gibbs,achlioptas2008algorithmic}, and it has long been expected that solution geometry has implications for algorithmic tractability. The overlap gap property, introduced in \cite{gamarnik2014limits}, provided an important framework to rigorously link solution geometry and computational hardness. 
Since then, OGP has become a central tool for studying computational hardness in random search and optimization problems; see, for example, \cite{RV17,GL18,wein2022optimal,gamarnik2022algorithms,bresler2022algorithmic,gamarnik2023geometric,huang2025tight}. In its simplest form, an OGP roughly asserts that the overlap between two solutions cannot lie in a certain intermediate range, so that any two solutions must be either far apart or close together. This geometric obstruction has proved to be a powerful barrier for stable algorithms. Several refinements of the original notion have since been developed, including multi-OGP \cite{RV17,gamarnik2017performance,gamarnik2023algorithmic}, ladder OGP \cite{wein2022optimal,bresler2022algorithmic}, and branching OGP \cite{huang2025tight,jones2022random,huang2023algorithmic,du2025algorithmic,bhamidi2026finding,Huang2026perceptronalg}. More recently, OGP-based ideas have also been used to rule out \emph{online} algorithms; see, for example, \cite{gamarnik2023geometric,du2025algorithmic,gamarnik2025optimal,dhawan2025sharp,bhamidi2026finding}. Relatedly, recent work has also shown that the connection between OGP and algorithmic hardness can be subtle: \cite{li2025some} exhibits algorithmically easy average-case optimization problems that nevertheless satisfy OGP, while \cite{huang2025strong} provides a refined rule of thumb for converting OGPs into algorithmic hardness, and \cite{ma2025polynomial} gives a related counterexample for sampling.

Because OGP rules out stable algorithms through geometric properties of the solution space, it is natural to ask whether the same framework can also rule out algorithms that locate isolated solutions in perceptron models.
However, it is not clear how to make such an approach rigorous. Our proof instead takes a different route that does not rely on OGP; we are not aware of prior work showing hardness for a search problem against stable algorithms without an OGP-based argument. We refer to Subsection~\ref{subsec:proof-overview} for further discussion.

\subsubsection{Low-degree polynomials}
A complementary approach is the low-degree polynomial framework. Originally developed in the study of high-dimensional inference problems, it has since become a general framework for analyzing average-case computational problems. In many such settings, low-degree polynomials capture the best known guarantees achieved by algorithmic families such as spectral methods, approximate message passing, and subgraph counts; see, for example, \cite{hopkins2018statistical,schramm2022computational,mao2024testing,ding2025low,chen2026computational,bok2026detection}. Moreover, the framework also provides strong evidence for algorithmic hardness. Indeed, a conjecture of Hopkins \cite{hopkins2018statistical} predicts that the failure of degree-$D$ polynomial algorithms implies the failure of ``robust'' algorithms running in time $n^{\wt O(D)}$. This is often referred to as the \emph{low-degree conjecture}. 

In recent years, the low-degree framework has also been brought to bear on \emph{random optimization problems} \cite{GJW20,gamarnik2024hardness}, including the largest independent set problem \cite{wein2022optimal}, random $k$-SAT \cite{bresler2022algorithmic}, and local optima in spin glasses \cite{huang2025strong}. 
Here (unlike the statistical estimation setting) the low-degree property is relevant mainly due to its stability implications.

\subsection{Technical overview}
\label{subsec:proof-overview}
Here we outline the proofs of our results and discuss the key challenges in applying the OGP framework.

\paragraph{Stable algorithms cannot find isolated solutions.}
A standard route to hardness results for stable algorithms is the overlap-gap framework. In our setting, however, that approach appears to face substantial technical obstacles; we discuss these difficulties in a later paragraph. Instead, we introduce a completely different argument, which to our knowledge has not appeared in previous work.

The starting point is to exploit the stability assumption directly. We consider two correlated Gaussian disorder matrices $\bG$ and $\wt{\bG}$, with correlation $\sqrt{1-\eta_N}$, where $\eta_N$ is very small, typically of order $\omega(\log N/N)$. We argue by contradiction. Suppose that there exists a stable algorithm $\mathcal A_N$ which locates an $\iota N$-isolated solution with large probability, say $0.99$. 
To simplify the discussion, we describe the simpler case in which the algorithm outputs an isolated solution exactly; the same idea also handles our more permissive success notion, where the algorithm is only required to output a point close to an isolated solution.
Consider the $\ell_2$-ball $\mathsf B_{\sqrt{\iota N}/3,2}\bigl(\mathcal A_N(\bG)\bigr)$ centered at the output of the algorithm on input $\bG$ (see the purple ball in Figure~\ref{fig:proof idea}). Conditioning on $\bG$, let $Z$ denote the number of solutions to $\wt{\bG}$ inside this ball.

\begin{figure}[ht]
\centering
\begin{tikzpicture}[scale=1.08]

\definecolor{ballpurple}{RGB}{164,132,255}
\definecolor{ballrose}{RGB}{244,201,210}
\definecolor{ballgold}{RGB}{236,206,120}
\definecolor{ballteal}{RGB}{142,206,200}
\definecolor{textgray}{RGB}{55,65,81}

\begin{scope}[shift={(0,0)}]

\coordinate (AG)  at (0,0);
\coordinate (AtG) at (0.78,0.36);

\fill[ballrose, opacity=0.28] (AtG) circle [radius=3.24];

\fill[ballpurple, opacity=0.72] (AG) circle [radius=1.5];

\fill[textgray] (AG)  circle [radius=1.2pt];
\fill[textgray] (AtG) circle [radius=1.2pt];

\node[anchor=north east, xshift=6mm, yshift=-1mm] at (AG) {$\mathcal A_N(\bG)$};
\node[anchor=north, xshift=1mm, yshift=0.5mm] at (AtG) {$\mathcal A_N(\widetilde{\bG})$};

\end{scope}

\begin{scope}[shift={(8.2,0)}]

\def\r{1.5}

\fill[ballgold, opacity=0.88] (0,0) circle (\r);

\begin{scope}
  \clip (0,0) circle (\r);
  \fill[ballteal, opacity=0.88]
    (0.22,1.55)
      .. controls (-0.20,1.02) and (-0.03,0.40) ..
    (-0.08,0.00)
      .. controls (-0.12,-0.38) and (0.10,-0.95) ..
    (0.32,-1.55)
      -- (2,-1.55) -- (2,1.55) -- cycle;
\end{scope}

\draw[textgray, line width=0.55pt, opacity=0.45]
    (0.22,1.55)
      .. controls (-0.20,1.02) and (-0.03,0.40) ..
    (-0.08,0.00)
      .. controls (-0.12,-0.38) and (0.10,-0.95) ..
    (0.32,-1.55);

\draw[textgray, opacity=0.18, line width=0.3pt] (0,0) circle (\r);

\coordinate (sig) at (-0.42,0.08);
\coordinate (tau) at (0.42,0.00);

\fill[textgray] (sig) circle [radius=1.2pt];
\fill[textgray] (tau) circle [radius=1.2pt];

\node[above left=1pt] at (sig) {$\bsigma$};
\node[above right=1pt] at (tau) {$\btau$};

\end{scope}

\end{tikzpicture}
\caption{Left: the ball $\mathsf B_{\sqrt{\iota N}/3,2}\bigl(\mathcal A_N(\bG)\bigr)$ around the output $\mathcal A_N(\bG)$, together with a larger neighborhood around $\mathcal A_N(\widetilde{\bG})$. Right: schematically, the ball $\mathsf B_{\sqrt{\iota N}/3,2}\bigl(\mathcal A_N(\bG)\bigr)$ is divided into a left region containing $\bsigma$ and a right region containing $\btau$.}
\label{fig:proof idea}
\end{figure}
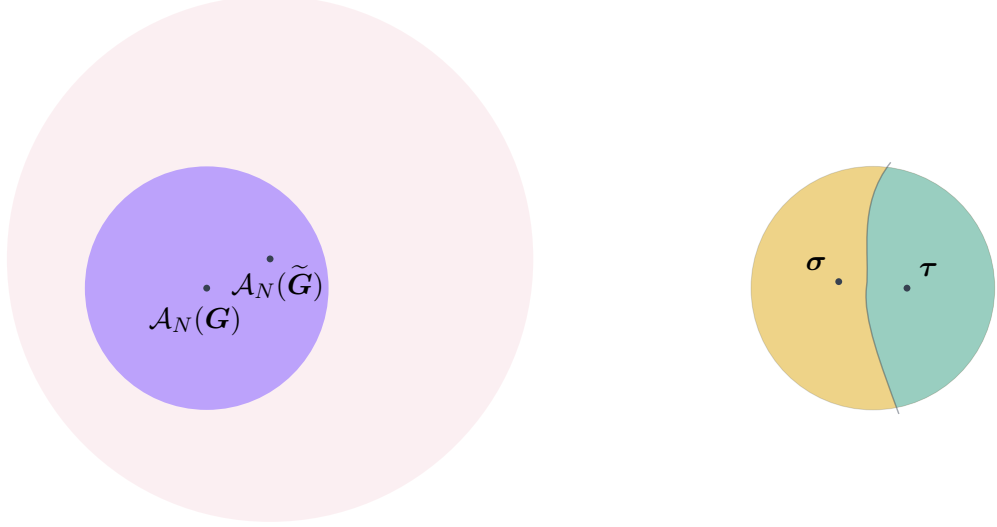

The key point is that, under our assumption, with large probability one has $Z=1$. Indeed, stability implies that $\mathcal A_N(\wt{\bG})$ remains close to $\mathcal A_N(\bG)$, so whenever $\mathcal A_N(\wt{\bG})$ successfully locates a solution, that solution must lie inside the above ball, giving $Z\ge 1$. On the other hand, if $\mathcal A_N(\wt{\bG})$ locates an $\iota N$-isolated solution, then there cannot be any other solution within Hamming distance $\iota N$ of it. Since the above $\ell_2$-ball is contained in such a Hamming neighborhood, it follows that the ball contains at most one solution, hence $Z\le 1$.

We then show that this scenario is essentially impossible. To do so, we partition the small ball into two disjoint parts, 
$\mathsf C_1$ and $\mathsf C_2$, and let $Z_1,Z_2$ denote the numbers of solutions to $\wt{\bG}$ in these two regions. The conclusion that $Z_1+Z_2=Z=1$ suggests a negative correlation between $Z_1$ and $Z_2$: if one side contains a solution, then the other typically cannot. We prove instead that the opposite is true: the relevant solution events are in fact non-negatively correlated, leading to a contradiction.

The reason is that for any two candidate vectors $\bsigma\in \mathsf C_1$ and $\btau\in \mathsf C_2$, both vectors lie in a very small neighborhood, and hence are close to each other. The events that $\bsigma$ and $\btau$ are solutions to $\wt{\bG}$ are determined by the inequalities
\[
\langle \wt{\bg}^a,\bsigma\rangle \ge \kappa\sqrt N,
\qquad
\langle \wt{\bg}^a,\btau\rangle \ge \kappa\sqrt N,
\qquad a\in[M].
\]
Because $\bsigma$ and $\btau$ are close, these Gaussian random variables $\langle \wt{\bg}^a,\bsigma\rangle$ and $\langle \wt{\bg}^a,\btau\rangle$ are positively correlated, whereas for $a\neq b$, the random variables $\langle \wt{\bg}^a,\bsigma\rangle$ and $\langle \wt{\bg}^b,\btau\rangle$ are conditionally independent given $\bG$.
Pitt's correlation inequality\footnote{In fact the better-known FKG inequality can also be used instead of Pitt's inequality, after a suitable orthogonal transformation on $\bbR^N$ (which makes no difference for isotropic Gaussian disorder).
The point is to make all vectors in a small Hamming neighborhood of say $\bsig=(1,1,\dots,1)$ have all their coordinates positive. 
This can be ensured by reflecting the Boolean cube through the line $\bbR\bsig$, after which all $\btau$ within Hamming distance $N/4$ of $(1,1,\dots,1)$ have positive coordinates.
The remaining argument is then identical to ours.
} on Gaussian space then implies that the corresponding feasibility events are also non-negatively correlated. Intuitively, if two candidate configurations are very similar, then the event that one satisfies all constraints should only increase the chance that the other does as well.

The step of splitting the small ball into two parts involves some technicalities. To make the negative-correlation argument work, we need the partition to be approximately balanced. For this, we show that after a small perturbation, each binary vector in a small Hamming neighborhood of the isolated solution is a solution to $\wt\bG$ with probability at most $1/2+o_N(1)$. A discrete intermediate value argument (see Lemma~\ref{lem:partition-I}) then suggests that this neighborhood can be partitioned into two subsets such that each subset contains a solution for $\wt\bG$ with probability bounded away from zero.
More precisely, to explain the origin of the constant $\frac{3\sqrt{17}-9}{4}$, let $S$ be the probability that the small ball contains an isolated solution. By Lemma~\ref{lem:partition-I}, we may partition the ball into $\mathsf C_1$ and $\mathsf C_2$ so that the product of the probabilities that $\mathsf C_1$ and $\mathsf C_2$ each contain a solution is at least $2S^2/9$. By Pitt's inequality, the probability that both $\mathsf C_1$ and $\mathsf C_2$ contain a solution, and hence that the ball contains at least two solutions, is at least $2S^2/9$. Therefore, $\frac{2S^2}{9}\le 1-S$, which yields $S\le \frac{3\sqrt{17}-9}{4}$.

In summary, stability together with successful recovery of isolated solutions would force a small neighborhood to contain exactly one solution with high probability. After partitioning this neighborhood into two pieces, this would require the solution counts in the two pieces to be negatively correlated. Pitt's inequality rules this out by giving a non-negative correlation instead, yielding the desired contradiction. This obstruction is fundamentally different from previous hardness arguments based on overlap gaps.

\paragraph{Low-degree consequence.}
The low-degree lower bounds are obtained by a reduction to the stable-algorithm setting. Lemma~3.4 in \cite{gamarnik2024hardness} shows that if $\mathcal A_N^\circ$ is a degree-$D$ polynomial output with bounded second moment, then under a small Gaussian resampling of the disorder one has
\[
\mathbb E\bigl[\|\mathcal A_N^\circ(\bG)-\mathcal A_N^\circ(\widetilde \bG)\|_2^2\bigr]\lesssim D\eta_N N.
\]
Since coordinate-wise clipping to $[-1,1]$ is nonexpansive, the associated bounded output $\mathcal A_N=\Pi\circ \mathcal{A}_N^\circ$ inherits the same stability estimate. Thus, whenever $D\eta_N\to 0$, degree-$D$ polynomial algorithms fall within the scope of our stable-algorithm obstruction. In this way, the impossibility result for stable algorithms transfers directly to low-degree polynomials, yielding the consequence stated in Subsection~\ref{subsubsection:low-degree-hardness}.

\paragraph{Symmetric binary and generalized perceptrons.}
A similar argument also applies to the symmetric binary perceptron. The main new issue in the symmetric case is that each constraint is two-sided, so the global feasibility event is no longer monotone in the Gaussian constraint values, unlike in the asymmetric setting. As a result, Pitt's inequality cannot be invoked directly.
However this non-monotonicity disappears at a sufficiently local scale. Indeed, within any Hamming ball of radius $o(\sqrt{N})$, and for any fixed pattern $\bg^a$, at most one of the two boundary inequalities can be active throughout the ball: traversing both boundaries would require the corresponding inner product to change by $\Omega(1)$, which is impossible on such a small neighborhood. Therefore, after conditioning on the relevant side of each constraint for each $a$, the local feasibility events become coordinate-wise monotone, and Pitt's correlation inequality again applies. The same principle extends to perceptron models defined by a finite union of intervals.

\paragraph{Challenges and limitations of the OGP framework.}
The standard route to hardness for stable algorithms is through the overlap-gap property (OGP) \cite{Garmarnik21}. Consider the $2$-OGP as a simple illustration. For a fixed instance $\bG$, suppose there exist two strongly isolated $\bsig_1,\bsig_2\in \mathcal{S}^\circ(\bG,\kappa,\iota)$. Since both are isolated solutions of the same instance, their overlap is automatically constrained by isolation, namely
\[
\langle \bsig_1,\bsig_2\rangle \le (1-2\iota)N.
\]
However, to use the $2$-OGP to rule out stable algorithms, one must instead consider a pair of correlated instances $(\bG_1,\bG_2)$, where $\bsig_j$ is an isolated solution for $\bG_j$. In this setting, the overlap $\langle \bsig_1,\bsig_2\rangle$ is no longer controlled for free by the isolation condition, and must itself be analyzed.

This leads to a substantial technical difficulty: one is forced to study the local geometry around $\bsig_1$ under conditioning on which nearby spin configurations are feasible for the corresponding instance. Such conditioning introduces a complicated dependence structure that is difficult to handle. Moreover, even if one aims only for a stability-based contradiction, it is not clear what form of uniform control should remain available, since arguments involving correlated pairs $(\bsig_1,\bsig_2)$ typically require union bounds over the solution set, which can be prohibitively costly.

\subsection{Organization}
The rest of the paper is organized as follows. In Section~\ref{sec:prelim}, we collect the preliminary tools used throughout the paper, including correlation inequalities and the $\ell_2$-stability of low-degree polynomial outputs. In Section~\ref{sec:property}, we develop structural properties of isolated solutions, showing in particular that isolation forces small margin and that isolated solutions are fragile under small perturbations of the disorder. In Section~\ref{sec:proof}, we prove the stable-algorithm obstruction, the consequence that highly successful stable algorithms avoid isolated solutions, and the corresponding low-degree corollary; we then discuss the $k$-scale and finite-union-of-interval variants. Finally, in Section~\ref{sec:discuss}, we conclude with a discussion and several open problems.

\subsection{Notations}
We use standard big-$O$ notation throughout, with $N$ as the underlying asymptotic parameter. Thus, for sequences $a_N$ and $b_N$, we write $a_N = O(b_N)$ and $a_N \lesssim b_N$ to mean that $a_N \le C b_N$ for some absolute constant $C>0$; similarly, $a_N = \Omega(b_N)$ and $a_N \gtrsim b_N$ mean that $b_N = O(a_N)$, and $a_N = \Theta(b_N)$ means that both $a_N = O(b_N)$ and $a_N = \Omega(b_N)$. We also write $a_N = o(b_N)$ and $a_N \ll b_N$ if $a_N/b_N \to 0$ as $N\to\infty$, and $a_N = \omega(b_N)$ and $a_N \gg b_N$ if $b_N = o(a_N)$. Finally, $\widetilde O$, $\widetilde \Omega$, $\widetilde \Theta$, $\widetilde o$, and $\widetilde \omega$ hide factors that are polylogarithmic in $N$. When the asymptotic parameter is clear from context, we occasionally suppress the subscript $N$ in this notation.
We write $\sqcup$ for disjoint union.

\subsection{Preliminaries: correlation inequalities and low-degree stability}\label{sec:prelim}

We will use the following classical correlation inequality of Pitt (which is proved via interpolation).

\begin{proposition}[\cite{pitt1982positively}]
\label{prop:pitt}
Let $F,G:\R^D\to\R$ be coordinate-wise increasing functions, and let $\bX\in\R^D$ be a (not necessarily centered) Gaussian vector whose covariances are all nonnegative: $\mathrm{Cov}(\bX_i,\bX_j)\ge 0$ for all $i,j$.
Then
\[
\bbE[F(\bX)G(\bX)] \ge \bbE[F(\bX)]\,\bbE[G(\bX)].
\]
The same conclusion holds if $F$ and $G$ are both coordinate-wise \emph{decreasing}.
\end{proposition}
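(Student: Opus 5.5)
We would prove Proposition~\ref{prop:pitt} by Gaussian interpolation between an independent copy and a perfectly correlated copy of $\bX$. We assume throughout that $F,G$ are bounded, or at least square-integrable under the law of $\bX$, since that is all the later applications need (indicators of monotone events). Write $\bX=\mu+\bSigma^{1/2}\bg$, where $\bSigma_{ij}=\mathrm{Cov}(\bX_i,\bX_j)\ge 0$. Let $\bg'$ be an independent copy of $\bg$, and for $t\in[0,1]$ set
\[
\bX^{(t)}:=\mu+\bSigma^{1/2}\bigl(\sqrt t\,\bg+\sqrt{1-t}\,\bg'\bigr).
\]
Each $\bX^{(t)}$ has the law of $\bX$, and the cross-covariance is $\mathrm{Cov}(\bX_i,\bX^{(t)}_j)=t\,\bSigma_{ij}$. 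Define $\phi(t):=\bbE[F(\bX)G(\bX^{(t)})]$. Then $\phi(0)=\bbE F(\bX)\,\bbE G(\bX)$ because the two arguments are independent, and $\phi(1)=\bbE[F(\bX)G(\bX)]$. So it suffices to show that $\phi$ is nondecreasing.

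\textbf{Step 1 (smooth case).} First assume $F,G$ are $C^2$ with bounded derivatives and $\bSigma$ is positive definite. Apply Gaussian integration by parts to the pair $(\bX,\bX^{(t)})$, which is jointly Gaussian with covariance depending linearly on $t$ only through the off-diagonal block $t\bSigma$. Differentiating the joint density in the entry of that block indexed by $(i,j)$ produces the operator $\partial_{x_i}\partial_{y_j}$. Hence
\[
\phi'(t)=\sum_{i,j=1}^D \bSigma_{ij}\,\bbE\bigl[\partial_iF(\bX)\,\partial_jG(\bX^{(t)})\bigr].
\]
An equivalent route is to differentiate in $t$ directly and integrate by parts separately in $\bg$ and in $\bg'$. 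Every term in the sum is nonnegative: $\bSigma_{ij}\ge 0$ by hypothesis, and $\partial_iF,\partial_jG\ge 0$ by coordinate-wise monotonicity. Therefore $\phi'\ge 0$, and $\phi(1)\ge\phi(0)$ gives the inequality.

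\textbf{Step 2 (approximation).} For general monotone $F,G$, first replace $\bSigma$ by $\bSigma+\varepsilon I$. This keeps all covariances nonnegative and makes the covariance nondegenerate. Next mollify: $F_\delta:=F*\psi_\delta$ and $G_\delta:=G*\psi_\delta$, with $\psi_\delta\ge 0$ a smooth compactly supported kernel. Convolution against a nonnegative kernel preserves coordinate-wise monotonicity, so Step 1 applies to $F_\delta,G_\delta$. (If needed we also truncate $F,G$ to $[-K,K]$ first; this is a monotone operation, so it too preserves monotonicity.) We then let $\delta\to 0$, then $\varepsilon\to0$, then $K\to\infty$. The limits are justified by dominated convergence, using that $F_\delta\to F$ almost everywhere at Lebesgue points and that the Gaussian laws converge with densities. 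For degenerate $\bSigma$ there is no density, so there we use the representation $\bX_\varepsilon=\bX+\sqrt\varepsilon\,\bg''$ together with almost-everywhere continuity arguments; alternatively, in our application $F,G$ are indicators of closed monotone sets, and we can approximate them from above by continuous monotone functions. The decreasing case follows by applying the increasing case to $-F$ and $-G$.

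\textbf{Main obstacle.} The only delicate step is this regularization: passing from smooth functions to general (e.g.\ indicator) monotone functions, in particular when $\bSigma$ is degenerate. We would handle it by first approximating $F,G$ monotonically and continuously, for instance by taking the supremum of Lipschitz monotone minorants, which converge pointwise to $F,G$ for upper semicontinuous $F,G$. After that, only weak convergence of the Gaussian laws is needed.
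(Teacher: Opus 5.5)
Your interpolation argument is the standard proof of Pitt's inequality, and it is the route the paper points to; the paper gives no proof of its own, only the citation \cite{pitt1982positively} and the remark that it is proved by interpolation. Step 1 is sound up to a harmless slip. With $\bX^{(t)}=\mu+\bSigma^{1/2}(\sqrt t\,\bg+\sqrt{1-t}\,\bg')$ the cross-covariance is $\sqrt t\,\bSigma$, not $t\bSigma$, so $\phi'(t)$ carries an extra positive factor $1/(2\sqrt t)$. Also, your second route (integrating by parts separately in $\bg$ and $\bg'$) never uses invertibility of $\bSigma$. The second-derivative terms cancel, giving
\[
\phi'(t)=\frac{1}{2\sqrt t}\sum_{i,j}\bSigma_{ij}\,\bbE\bigl[\partial_iF(\bX)\,\partial_jG(\bX^{(t)})\bigr]
\]
for every positive semidefinite $\bSigma$.

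The genuine gap is in Step 2, which you flag as delicate but do not close: passing $\varepsilon\to0$ for discontinuous $F,G$ when $\bSigma$ is degenerate. An ``almost-everywhere continuity'' argument cannot work there, because the law of $\bX$ may sit entirely on the discontinuity set of $F$. Take $D=2$, $\mu=0$, $\bSigma=\mathrm{diag}(1,0)$ and $F=\mathbf 1\{x_2\ge 0\}$. Then $F(\bX)\equiv 1$, but $\bbE F(\bX+\sqrt\varepsilon\,\bg'')=1/2$ for every $\varepsilon>0$. This is not an edge case for the paper: the vector $(X_{\btau,a})_{\btau\in\mathsf{CAND}_\kappa,\,a\le M}$ has exponentially many coordinates but a covariance of rank at most $NM$.

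Your fallback covers only semicontinuous functions, and its direction is reversed. A supremum of Lipschitz minorants recovers \emph{lower} semicontinuous functions. For an upper semicontinuous $F=\mathbf 1_U$ with $U$ a closed upper set, you must approximate from above, e.g.\ by $F_n(\bx)=\max\{0,1-n\,\mathrm{dist}(\bx,U)\}$. This $F_n$ is increasing because $\mathrm{dist}(\cdot,U)$ is nonincreasing when $U$ is an upper set, and $F_n\downarrow\mathbf 1_U$ pointwise.

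To prove the proposition as stated, add a reduction to closed upper sets. After truncation, Hoeffding's identity
\[
\mathrm{Cov}\bigl(F(\bX),G(\bX)\bigr)=\int\!\!\int\Bigl(\bbP\bigl[F(\bX)>s,\,G(\bX)>r\bigr]-\bbP\bigl[F(\bX)>s\bigr]\,\bbP\bigl[G(\bX)>r\bigr]\Bigr)\,ds\,dr
\]
reduces the claim to $\nu(U\cap V)\ge\nu(U)\nu(V)$ for the upper sets $U=\{F>s\}$, $V=\{G>r\}$, where $\nu$ is the law of $\bX$.

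\begin{itemize}
\item By inner regularity, choose compact $K_n\subseteq U$ with $\nu(K_n)\to\nu(U)$. Then $U_n:=K_n+\bbR^D_{\ge 0}$ is a closed upper set contained in $U$; define $V_n\subseteq V$ the same way.
\item This gives $\nu(U\cap V)\ge\nu(U_n\cap V_n)\ge\nu(U_n)\nu(V_n)\to\nu(U)\nu(V)$.
\item For closed upper sets, use the Lipschitz approximation from above. Combine it with the smooth case, which holds for degenerate $\bSigma$ as noted above; mollification of a Lipschitz function converges uniformly. Finish with dominated convergence. No density is needed anywhere.
\end{itemize}

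In the paper's own applications, the events are exactly of this closed-upper-set form, or complements of such sets, which your $-F,-G$ trick handles. So your restricted version already suffices there once the approximation direction is corrected.
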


We next deduce the low-degree stability estimate stated earlier as Corollary~\ref{cor:ld-stable}; it is immediate from the two statements below and Markov's inequality.

\begin{proposition}[Noise sensitivity of low-degree polynomials (see Lemma~3.4, \cite{gamarnik2024hardness})] 

\label{prop:gjw}
Let $\mathcal A_N^\circ$ be a degree-$D$ polynomial output with $\bbE[\|\mathcal A_N^\circ(\bG)\|_2^2]\le C N$.
For $(\bG,\widetilde \bG)$ coupled as in \eqref{eq:noise} at noise level $\eta_N$, 
\[
\bbE\big[\|\mathcal A_N^\circ(\bG)-\mathcal A_N^\circ(\widetilde \bG)\|_2^2\big] \le 2 C D\,\eta_N\, N.
\]
\end{proposition}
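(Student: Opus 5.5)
We prove the bound $\bbE\|\mathcal A_N^\circ(\bG)-\mathcal A_N^\circ(\widetilde \bG)\|_2^2\le 2CD\eta_N N$ coordinate by coordinate, using the Hermite (Wiener chaos) expansion with respect to the i.i.d.\ standard Gaussian entries of $\bG$. We may fix the seed $\omega$, since it is independent of $\bG$, and average at the end. Then each coordinate $f=(\mathcal A_N^\circ)_i$ is a polynomial of degree at most $D$ in the $NM$ Gaussian entries.

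\textbf{Expansion.} Since $f$ is a polynomial of degree at most $D$, it lies in the span of chaoses of order at most $D$:
\[
f=\sum_{d=0}^D f_d ,
\]
where $f_d$ is the projection onto the $d$-th Wiener chaos, spanned by multivariate Hermite polynomials $H_\alpha$ with $|\alpha|=d$.

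\textbf{Noise operator.} The coupling \eqref{eq:noise} is the Ornstein--Uhlenbeck coupling with correlation $r=\sqrt{1-\eta_N}$. It satisfies $\bbE[H_\alpha(\bG)H_\beta(\widetilde\bG)]=\ind_{\alpha=\beta}\, r^{|\alpha|}\alpha!$. Hence
\[
\bbE[f(\bG)f(\widetilde\bG)]=\sum_{d\le D} r^d\,\bbE[f_d^2].
\]

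\textbf{Difference.} Using $\widetilde\bG\stackrel{d}{=}\bG$,
\[
\bbE(f(\bG)-f(\widetilde\bG))^2 = 2\sum_{d\le D}(1-r^d)\,\bbE[f_d^2].
\]

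\textbf{Elementary estimate.} We have $1-r^d\le d(1-r)$, and $1-\sqrt{1-\eta}\le \eta$ for $\eta\in(0,1)$. Hence $1-r^d\le D\eta_N$ for all $d\le D$. This gives
\[
\bbE(f(\bG)-f(\widetilde\bG))^2\le 2D\eta_N\,\bbE f(\bG)^2 .
\]

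\textbf{Summation.} Summing over the $N$ coordinates, and then averaging over $\omega$, gives
\[
\bbE\|\mathcal A_N^\circ(\bG)-\mathcal A_N^\circ(\widetilde\bG)\|_2^2\le 2D\eta_N\,\bbE\|\mathcal A_N^\circ(\bG)\|_2^2\le 2CD\eta_N N .
\]

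The only point requiring care is the orthogonality identity under the correlated coupling, i.e.\ Mehler's formula for $\bbE[H_\alpha(\bG)H_\beta(\widetilde\bG)]$. It factorizes over entries and reduces to the one-dimensional fact $\bbE[H_j(X)H_k(Y)]=\delta_{jk}\,r^k k!$ for standard Gaussians $X,Y$ with correlation $r$. That fact follows from the generating function $\bbE[e^{sX-s^2/2}e^{tY-t^2/2}]=e^{rst}$. Everything else is routine. The factor $2$ in the bound is harmless; one could even obtain $D\eta_N$ in place of $2D\eta_N$ by using the sharper estimate $1-\sqrt{1-\eta}\le\eta/\ldots$, but this is not needed.
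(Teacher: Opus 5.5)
Your argument is correct. It is essentially the standard proof behind Lemma~3.4 of \cite{gamarnik2024hardness}, which the paper cites rather than reproving: expand each coordinate in Wiener chaos, use Mehler's identity $\bbE[f(\bG)f(\widetilde\bG)]=\sum_{d\le D}r^d\,\bbE[f_d^2]$ with $r=\sqrt{1-\eta_N}$, and bound $1-r^d\le D(1-r)\le D\eta_N$. Your closing aside about improving the constant to $D\eta_N$ is wrong for $D=1$, since a mean-zero linear $f$ gives exactly $2(1-\sqrt{1-\eta})\,\bbE f^2>\eta\,\bbE f^2$, but it plays no role in the proof.
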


\begin{lemma}
\label{lem:clip}
The coordinate-wise projection $\proj:\R^N\to[-1,1]^N$ satisfies
\[
\|\proj(\bx)-\proj(\by)\|_2 \le \|\bx-\by\|_2\qquad \text{for all }\bx,\by\in\R^N.
\]
In particular, for $\mathcal A_N=\proj\circ \mathcal A_N^\circ$,
\[
\bbE\|\mathcal A_N(\bG)-\mathcal A_N(\widetilde \bG)\|_2^2
\le \bbE\|\mathcal A_N^\circ(\bG)-\mathcal A_N^\circ(\widetilde \bG)\|_2^2.
\]
\end{lemma}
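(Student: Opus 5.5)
The claim is that coordinate-wise clipping to $[-1,1]$ is $1$-Lipschitz in $\ell_2$. I would prove it coordinate by coordinate, since $\proj$ acts separately on each coordinate.

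Write $\proj(\bx)_i=\pi(x_i)$ with $\pi(t)=\max(-1,\min(1,t))$. The first step is the scalar bound $|\pi(s)-\pi(t)|\le|s-t|$ for all $s,t\in\R$. This holds because $\pi$ is the metric projection onto the closed convex set $[-1,1]\subset\R$, and such projections are nonexpansive. One can also check it directly: $\pi$ is monotone and piecewise linear with slopes in $\{0,1\}$, so for $s\le t$ we have $0\le\pi(t)-\pi(s)=\int_s^t\pi'(u)\,du\le t-s$.

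Squaring the scalar bound and summing over $i=1,\dots,N$ gives
\[
\|\proj(\bx)-\proj(\by)\|_2^2=\sum_i|\pi(x_i)-\pi(y_i)|^2\le\sum_i|x_i-y_i|^2=\|\bx-\by\|_2^2 .
\]
Taking square roots yields the first claim. Equivalently, $\proj$ is the Euclidean projection onto the convex set $[-1,1]^N$, and the same conclusion follows from the general nonexpansiveness of convex projections.

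For the second claim, apply this pointwise inequality with $\bx=\mathcal A_N^\circ(\bG,\omega)$ and $\by=\mathcal A_N^\circ(\widetilde\bG,\omega)$ for each realization of $(\bG,\widetilde\bG,\omega)$. This gives
\[
\|\mathcal A_N(\bG)-\mathcal A_N(\widetilde\bG)\|_2^2\le\|\mathcal A_N^\circ(\bG)-\mathcal A_N^\circ(\widetilde\bG)\|_2^2
\]
almost surely, and taking expectations (by monotonicity of expectation) completes the proof.

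There is no real obstacle here. The only point needing care is the scalar Lipschitz bound, which is handled by the case analysis or the monotone-slope argument above.
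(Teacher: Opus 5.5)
Your proposal is correct and follows the same route as the paper. The paper's one-line proof also uses that the scalar clipping map $u\mapsto\min\{1,\max\{-1,u\}\}$ is $1$-Lipschitz, hence $\proj$ is $1$-Lipschitz in $\ell_2$; you spell out the coordinate-wise summation and the pointwise-then-expectation step that the paper leaves implicit.
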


\begin{proof}
Since $u\mapsto \min\{1,\max\{-1,u\}\}$ is $1$-Lipschitz on $\R$, the map $\Pi$ is $1$-Lipschitz on $\R^N$ in $\ell_2$.
\end{proof}

\section{Properties of isolated solutions}\label{sec:property}
This section provides the key structural properties of isolated solutions that drive our impossibility results.

\subsection{Isolation forces small margin}

\begin{proposition}
\label{prop:isolation-implies-low-margin}
Fix $\kappa\in\R$ and $\alpha>0$.
With probability $1-o_N(1)$ over $\bG$, every $\bsigma\in\cS^\circ_k(\bG,\kappa)$ (for every $k\ge 1$) satisfies
\[
m(\bsigma) \le \frac{2\,\|\bG\|_{\infty}}{\sqrt{N}}, \qquad \text{where }\ \|\bG\|_{\infty}:=\max_{1\le i\le N,\ 1\le a\le M} |\bg_i^a|.
\]
In particular, since $M=\Theta(N)$, one has $\|\bG\|_{\infty}=O\big(\sqrt{\log N}\big)$ with probability $1-o_N(1)$, hence
\[
m(\bsigma)=O\Big(\sqrt{\frac{\log N}{N}}\Big)
\quad \text{uniformly over all }\bsigma\in\bigcup_{k\ge 1}\cS^\circ_k(\bG,\kappa).
\]
\end{proposition}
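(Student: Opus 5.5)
The argument is deterministic once $\bG$ is fixed, with probability entering only through the bound on $\|\bG\|_\infty$. The plan is to show that if $m(\bsigma)$ is large, then some single-coordinate flip of $\bsigma$ is still a solution. Such a neighbor lies at Hamming distance $1\le k$, which contradicts $k$-isolation for every $k\ge 1$ simultaneously.

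Fix $\bsigma\in\cS(\bG,\kappa)$ with $m(\bsigma)>2\|\bG\|_\infty/\sqrt N$. For any coordinate $i$, let $\bsigma^{(i)}$ denote $\bsigma$ with the $i$-th coordinate flipped. For each constraint $a$,
\[
\frac{\la \bg^a,\bsigma^{(i)}\ra}{\sqrt N}-\kappa=\frac{\la \bg^a,\bsigma\ra}{\sqrt N}-\kappa-\frac{2\bsigma_i\bg^a_i}{\sqrt N}\ge m(\bsigma)-\frac{2\|\bG\|_\infty}{\sqrt N}>0 .
\]
Hence $m(\bsigma^{(i)})>0$, so $\bsigma^{(i)}\in\cS(\bG,\kappa)\cap\Ball_{1,\mathrm H}(\bsigma)$ and $\bsigma^{(i)}\ne\bsigma$. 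Therefore $\bsigma\notin\cS^\circ_1(\bG,\kappa)\supseteq\cS^\circ_k(\bG,\kappa)$. Taking the contrapositive, every isolated solution satisfies $m(\bsigma)\le 2\|\bG\|_\infty/\sqrt N$. This step holds deterministically for every realization of $\bG$, so the ``with probability $1-o_N(1)$'' qualifier in the first display is needed only for the second part.

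For the second part we bound $\|\bG\|_\infty$. It is a maximum of $NM=\Theta(N^2)$ standard Gaussians in absolute value. A union bound with the Gaussian tail $\bbP[|g|>t]\le 2e^{-t^2/2}$, taking $t=\sqrt{6\log N}$, gives
\[
\bbP[\|\bG\|_\infty>t]\le 2NM\,N^{-3}=O(N^{-1})=o_N(1).
\]
On this event the first part yields $m(\bsigma)\le 2\sqrt{6\log N}/\sqrt N=O(\sqrt{\log N/N})$, uniformly over all isolated solutions and all $k$.

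Neither step is a genuine obstacle. The only point needing care is that the flip argument uses $k\ge1$, so that a Hamming-distance-$1$ neighbor is excluded by isolation. It also uses the convention that the margin inequality is non-strict ($m\ge 0$ characterizes solutions), which the strict inequality above easily satisfies.
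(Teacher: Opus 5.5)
Your proof is correct and uses the same argument as the paper: flipping a single bit changes each constraint value by at most $2\|\bG\|_\infty/\sqrt N$, and this is combined with a Gaussian union bound on $\|\bG\|_\infty$. The only difference is that you argue by contrapositive (large margin implies $\bsigma^{(i)}$ is a solution), whereas the paper argues directly (isolation forces some violated constraint $a(i)$ whose slack at $\bsigma$ is below $2|\bg^{a(i)}_i|/\sqrt N$).
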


\begin{proof}
Fix $\bG$ and suppose $\sigma\in\cS^\circ_k(\bG,\kappa)$ with $k\ge 1$.
For each coordinate $i$, let $\bsigma^{(i)}$ denote the configuration obtained from $\bsigma$ by flipping its $i$-th bit.
Then $\dH(\bsigma,\bsigma^{(i)})=1\le k$, so by $k$-isolation we have $\bsigma^{(i)}\notin\cS(\bG,\kappa)$.
Hence for each $i$ there exists an index $a(i)\in\{1,\dots,M\}$ with
\[
\frac{\la \bg^{a(i)},\bsigma^{(i)}\ra}{\sqrt{N}} < \kappa.
\]
But
\[
\la \bg^{a(i)},\bsigma^{(i)}\ra = \la \bg^{a(i)},\bsigma\ra - 2 \bg^{a(i)}_i\bsigma_i,
\]
so
\[
\frac{\la \bg^{a(i)},\bsigma\ra}{\sqrt{N}} - \kappa < \frac{2|\bg^{a(i)}_i|}{\sqrt{N}} \le \frac{2\|\bG\|_{\infty}}{\sqrt{N}}.
\]
Taking the minimum over $a$ on the left-hand side yields $m(\bsigma)\le 2\|\bG\|_{\infty}/\sqrt{N}$.
Finally, $\|\bG\|_{\infty}=O(\sqrt{\log N})$ with probability $1-o_N(1)$ by a union bound and standard Gaussian tails.
\end{proof}

\subsection{Fragility of isolated solutions under perturbation}
The next corollary shows that isolated solutions are unstable under a small Gaussian perturbation of the disorder.
\begin{corollary}
\label{cor:noise-destroys}
Fix $\kappa\in\R$, $\alpha>0$, and let $k\ge 1$.
Let $\eta_N = o(1)$ satisfy \eqref{eq:eta-condition}, and let $(\bG,\widetilde \bG)$ be coupled as in \eqref{eq:noise}.
Then with probability $1-o_N(1)$ over $\bG$, the following holds.
For every $\bsigma\in\cS^\circ_k(\bG,\kappa)$ and every $\btau\in\Ball_{k,\mathrm{H}}(\bsigma)$,
\begin{equation}
\label{eq:uniform-half}
\bbP\big[\btau\in\cS(\widetilde \bG,\kappa)\,\big|\,\bG\big] \le \tfrac12+o_N(1).
\end{equation}
\end{corollary}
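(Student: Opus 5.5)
The plan is to reduce \eqref{eq:uniform-half} to a single Gaussian computation for one constraint. That constraint is the one that is nearly tight at $\bsigma$, and the reduction rests on Proposition~\ref{prop:isolation-implies-low-margin}.

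\emph{Step 1 (a nearly tight constraint).} Work on the event of probability $1-o_N(1)$ on which $\|\bG\|_\infty\le C\sqrt{\log N}$. On this event, Proposition~\ref{prop:isolation-implies-low-margin} shows that for every $\bsigma\in\cS^\circ_k(\bG,\kappa)$ there is an index $a^*=a^*(\bsigma)$ with
\[
0\le \la \bg^{a^*},\bsigma\ra/\sqrt N-\kappa\le 2C\sqrt{\log N/N}.
\]
Now fix $\btau\in\Ball_{k,\mathrm H}(\bsigma)$. Since $\btau\in\cS(\widetilde\bG,\kappa)$ requires in particular $\la\widetilde\bg^{a^*},\btau\ra\ge\kappa\sqrt N$, it suffices to bound the conditional probability of this single event. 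Conditionally on $\bG$,
\[
\frac{\la\widetilde\bg^{a^*},\btau\ra}{\sqrt N}=\sqrt{1-\eta_N}\,\frac{\la\bg^{a^*},\btau\ra}{\sqrt N}+\sqrt{\eta_N}\,\xi,\qquad \xi\sim\cN(0,1),
\]
because $\|\btau\|_2^2=N$. The conditional probability is therefore $\bar\Phi\bigl(-\mu/\sqrt{\eta_N}\bigr)$, where
\[
\mu:=\sqrt{1-\eta_N}\,\la\bg^{a^*},\btau\ra/\sqrt N-\kappa .
\]
The target becomes $\mu_+\le o(\sqrt{\eta_N})$, uniformly over all $\bsigma$ and $\btau$.

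\emph{Step 2 (bounding the shift).} Split $\mu$ into three pieces. The first is the margin of $\bsigma$, which is $O(\sqrt{\log N/N})=o(\sqrt{\eta_N})$ by \eqref{eq:eta-condition}. The second is the error from the factor $\sqrt{1-\eta_N}$, of order $\eta_N|\kappa|+\eta_N|\la \bg^{a^*},\bsigma\ra|/\sqrt N$; this is $O(\eta_N)=o(\sqrt{\eta_N})$. The third is the change
\[
\la\bg^{a^*},\btau-\bsigma\ra/\sqrt N .
\]
This third term is the main obstacle. When $k$ is large it can be as large as order $k/\sqrt N$, which is not small, so bounding it crudely by $2k\|\bG\|_\infty/\sqrt N$ only works for $k\ll\sqrt{N\eta_N/\log N}$.

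\emph{Handling the main obstacle.} First, $\bar\Phi(-x)\le\tfrac12$ whenever $x\le0$, so only the case where the inner product increases matters. To control that direction I would exploit the freedom in choosing which constraint to use. Any constraint $a$ with $\la\bg^a,\bsigma\ra/\sqrt N-\kappa=O(\sqrt{\log N/N})$ may be used, and the proof of Proposition~\ref{prop:isolation-implies-low-margin} supplies one such constraint $a(i)$ per flipped coordinate $i$. For each of them, flipping $i$ decreases the inner product. The fallback, if this does not give a uniform bound over all $\btau$, is to state the corollary at the local scale where the crude bound applies, namely $k=o(\sqrt{N\eta_N/\log N})$. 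That local scale appears to be how it is actually used, since Lemma~\ref{lem:partition-I} concerns a small neighborhood.

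\emph{Step 3 (uniformity).} The only randomness controlled in Steps 1 and 2 is $\|\bG\|_\infty$. All remaining bounds are deterministic in $\bG$ on the good event, so they hold uniformly over every $\bsigma$ and $\btau$. Together with the single-constraint reduction of Step 1, this gives $\bbP[\btau\in\cS(\widetilde\bG,\kappa)\mid\bG]\le\bar\Phi(-o(1))=\tfrac12+o_N(1)$.
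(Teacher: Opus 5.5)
Your treatment of $\btau=\bsigma$ is fine and matches the paper. For $\btau\neq\bsigma$, however, there is a genuine gap: the ``main obstacle'' you identify in Step 2 is never resolved.

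\begin{itemize}
\item Using the single-flip constraints $a(i)$ does not fix it. Suppose $\btau$ differs from $\bsigma$ on a large set $I$. Flipping $i$ lowers $\la\bg^{a(i)},\cdot\ra$ by only $2|\bg^{a(i)}_i|=O(\sqrt{\log N})$. The other flips in $I\setminus\{i\}$ can raise it by far more, and you need a bound uniform over all such $\btau$.
\item The fallback proves a strictly weaker statement. It is valid only for $k=o(\sqrt{N\eta_N/\log N})$, which is in particular $o(\sqrt{N/\log N})$.
\item Your remark that this is the scale at which the corollary is used is incorrect. In the proof of Theorem~\ref{thm:stable-v1} it is applied with $k=\lfloor\iota N\rfloor$. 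There $\mathsf{CAND}_\kappa$, an $\ell_2$-ball of radius $\sqrt{\iota N}/3+\rho_N$, has Hamming radius of order $\iota N$. Lemma~\ref{lem:partition-I} is purely combinatorial and carries no scale. Theorem~\ref{thm:k-isolation} likewise allows $k_N$ up to $N/10$.
\end{itemize}

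The missing idea is that you should not transport $\bsigma$'s nearly tight constraint to $\btau$ at all. Because $\bsigma$ is $k$-isolated, every $\btau\in\Ball_{k,\mathrm{H}}(\bsigma)\setminus\{\bsigma\}$ is itself \emph{not} a solution of $\bG$. So there is a constraint $a=a(\btau)$ with $\la\bg^a,\btau\ra/\sqrt N<\kappa$.

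Plug this constraint into your own Step 1 computation:
\begin{itemize}
\item It gives $\mu<-\kappa\bigl(1-\sqrt{1-\eta_N}\bigr)\le|\kappa|\eta_N$.
\item Hence $\bbP[\btau\in\cS(\widetilde\bG,\kappa)\mid\bG]\le\bar\Phi(-\mu/\sqrt{\eta_N})\le\bar\Phi(-|\kappa|\sqrt{\eta_N})=\tfrac12+O(\sqrt{\eta_N})$, where $\bar\Phi$ is the standard Gaussian tail as in your Step 1.
\item This holds uniformly in $\btau$, with no dependence on $k$ and no bound on $\la\bg^{a},\btau-\bsigma\ra$ needed.
\end{itemize}
The margin bound of Proposition~\ref{prop:isolation-implies-low-margin}, and hence condition \eqref{eq:eta-condition}, is needed only for $\btau=\bsigma$. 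This is exactly the paper's argument.
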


\begin{proof}
Fix $\bG$ and suppose $\bsigma\in\cS^\circ_k(\bG,\kappa)$.
By definition, $\cS(\bG,\kappa)\cap \Ball_{k,\mathrm{H}}(\bsigma)=\{\bsigma\}$.

Let $a_\star\in\arg\min_a \big(\la \bg^a,\bsigma\ra/\sqrt{N}\big)$ be an index achieving the margin, so
\[
\frac{\la \bg^{a_\star},\bsigma\ra}{\sqrt{N}} = \kappa + m(\bsigma).
\]
Conditional on $\bG$, we have
\[
\frac{\la \widetilde \bg^{a_\star},\bsigma\ra}{\sqrt{N}} = \sqrt{1-\eta_N}\,(\kappa+m(\bsigma)) + \sqrt{\eta_N}\frac{\la \bg'^{a_\star},\bsig\ra}{\sqrt{N}}.
\]
Therefore by conditional independence,
    \begin{equation}\label{eq:singleton-after-perturbation}
\bbP\big[\bsigma\in\cS(\widetilde \bG,\kappa)\,\big|\,\bG\big]
\le \bbP\Big[ \mathcal{N}(0,1) \ge T_N\,\Big|\,\bG\Big],
\end{equation}
where
\[
T_N := \frac{\kappa - \sqrt{1-\eta_N}\,(\kappa+m(\bsigma))}{\sqrt{\eta_N}}
= \frac{\kappa\,(1-\sqrt{1-\eta_N})}{\sqrt{\eta_N}} - \sqrt{\frac{1-\eta_N}{\eta_N}}\,m(\bsigma).
\]
As $\eta_N = o(1)$, the first term is $\kappa\,\sqrt{\eta_N}/2 + o(\sqrt{\eta_N})$.
By Proposition~\ref{prop:isolation-implies-low-margin}, with probability $1-o_N(1)$ over $\bG$ we have 
$m(\bsigma)=O\big(\sqrt{\log N/N}\big)$ uniformly over all $k$-isolated $\bsigma$.
Under \eqref{eq:eta-condition}, this implies $\eta_N^{-1/2}m(\bsigma)\to 0$ uniformly.
Hence $T_N\to 0$ uniformly on an event of probability $1-o_N(1)$, and on this event the right-hand side of \eqref{eq:singleton-after-perturbation} is bounded by $1/2+o_N(1)$, which yields \eqref{eq:uniform-half}. 

For any $\btau\in \Ball_{k,\mathrm{H}}(\bsig)\setminus \{\bsig\}$, $\btau$ is not a solution. Thus there exists $1\leq a_* \leq M $ such that 
\[
\frac{\la \bg^a,\btau\ra}{\sqrt{N}}<\kappa. 
\]
Similarly conditional on $\bG$, the perturbed inner product equals
\[
\frac{\la \wt\bg^a,\btau\ra}{\sqrt{N}} = \sqrt{1-\eta_N}\frac{\la \bg^a,\btau\ra}{\sqrt{N}} +\sqrt{\eta_N} \frac{\la \bg'^a,\btau\ra}{\sqrt{N}}<\sqrt{1-\eta_N}\kappa + \sqrt{\eta_N} \frac{\la \bg'^a,\btau\ra}{\sqrt{N}}\,.
\]
Again by conditional independence, we have
\begin{equation}
    \bbP[\btau\in \cS(\wt\bG,\kappa)|\bG] <\frac12+o_N(1)\,.\notag
\end{equation}
This completes the proof.
\end{proof}

\section{Proofs of main results}\label{sec:proof}

\subsection{Auxiliary combinatorial lemmas}
We begin with two elementary balancing lemmas used in the proofs of Theorems~\ref{thm:stable-v1} and \ref{thm:stable-v2}.

\begin{lemma}
    \label{lem:partition-I}
    Consider a sequence of non-negative numbers $\{p_i\}_{1\leq i\leq M}$ that satisfies the following:
   \begin{align}
       R:=\sum_{i=1}^M p_i&\geq 0.8\,,\label{eq:sum-pi}\\
       \max_{1\leq i\leq M} p_i&\leq 0.51 \,.\label{eq:max-pi}
   \end{align}
   Then there exists a partition of $[M]=\mathsf C_1 \sqcup \mathsf C_2$, such that 
   \[
   \sum_{i\in \mathsf C_1}p_i\geq R/3 \qquad\text{  and    }\qquad \sum_{i\in \mathsf C_2}p_i\geq R/3.
   \]
   Moreover, under this partition we have 
   \begin{equation}\label{eq:partition-product}
   \Big( \sum_{i\in \mathsf C_1} p_i \Big)
   \cdot 
   \Big( \sum_{i\in \mathsf C_2} p_i \Big)
   \geq 
   \frac{R}{3} \cdot \frac{2R}{3}=\frac{2R^2}{9}\,.
   \end{equation}
\end{lemma}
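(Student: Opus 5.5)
We build the partition greedily by a discrete intermediate value argument. Order the indices arbitrarily (or in decreasing order of $p_i$, which is not needed) and consider the partial sums $S_j:=\sum_{i\le j}p_i$ for $0\le j\le M$. These satisfy $S_0=0$, $S_M=R$, and $S_j-S_{j-1}=p_j\le 0.51$ by \eqref{eq:max-pi}. Let $j^\star$ be the smallest index with $S_{j^\star}\ge R/3$; it exists because $S_M=R\ge R/3$. Set $\mathsf C_1:=\{1,\dots,j^\star\}$ and $\mathsf C_2:=[M]\setminus \mathsf C_1$. Then $\sum_{\mathsf C_1}p_i=S_{j^\star}\ge R/3$ automatically, and it remains to show that $S_{j^\star}\le 2R/3$, so that $\sum_{\mathsf C_2}p_i=R-S_{j^\star}\ge R/3$.

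By minimality, $S_{j^\star-1}<R/3$, hence $S_{j^\star}<R/3+p_{j^\star}\le R/3+0.51$. This gives the upper bound $2R/3$ provided $0.51\le R/3$, i.e.\ $R\ge 1.53$, which need not hold. This overshoot is the main obstacle, and the ordering has to be used to handle it. The fix is to compare both sides: if the single-step overshoot pushes $S_{j^\star}$ above $2R/3$, we switch to taking $\mathsf C_1=\{j^\star\}$ alone. Then $p_{j^\star}>2R/3-S_{j^\star-1}>R/3$, so $\mathsf C_1$ has mass $\ge R/3$. Its complement has mass $R-p_{j^\star}\ge R-0.51\ge R/3$, since $R\ge 0.8$ gives $2R/3\ge 0.533>0.51$. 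This is exactly where the two numerical hypotheses \eqref{eq:sum-pi} and \eqref{eq:max-pi} enter.

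In every case we then have $x:=\sum_{\mathsf C_1}p_i\in[R/3,2R/3]$ and $\sum_{\mathsf C_2}p_i=R-x$. For \eqref{eq:partition-product}, the function $x\mapsto x(R-x)$ is concave, so its minimum over the interval $[R/3,2R/3]$ is attained at an endpoint. Both endpoints give $\frac R3\cdot\frac{2R}{3}=\frac{2R^2}{9}$, which proves the claim.
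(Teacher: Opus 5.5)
Your proposal is correct and follows the paper's proof essentially step for step. Both arguments stop the greedy prefix at the first partial sum reaching $R/3$. When the overshoot exceeds $2R/3$, both fall back to the singleton $\{j^*\}$, using $0.51\le 2R/3$. Both then bound $x(R-x)$ on $[R/3,2R/3]$ at its endpoints. One small correction: the index ordering is never used; the two-case split alone handles the overshoot.
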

\begin{proof}
    We add the items one by one until the sum first exceeds $R/3$
    \begin{equation}\label{eq:sum-first-exceed-S/3}
        \sum_{i=1}^{M_0}p_i< R/3,
        \quad 
        \sum_{i=1}^{M_0+1}p_i\geq R/3. 
    \end{equation}
    Then we consider the following two cases:
    \begin{itemize}
        \item If $\sum_{i=1}^{M_0+1}p_i\leq 2R/3$, then $\sum_{i=M_0+2}^M p_i\geq R/3$. We can take $\mathsf C_1=[M_0+1]$ and $\mathsf C_2=[M]\setminus \mathsf C_1$.
        \item If $\sum_{i=1}^{M_0+1}p_i\geq 2R/3$, then by \eqref{eq:sum-first-exceed-S/3} we have $p_{M_0+1}\geq R/3$. Combining with \eqref{eq:max-pi}, we can take $\mathsf C_1=\{M_0+1\}$ and $\mathsf C_2=[M]\setminus \mathsf C_1$.
        (Note that $0.51\leq 2R/3$ since $R\geq 0.8$.)
    \end{itemize}
    The first claim follows. The second claim \eqref{eq:partition-product} is from $t(1-t)\geq 2/9$ for $1/3\leq t\leq 2/3$.
    The proof is complete.
\end{proof}

\begin{lemma}
\label{lem:partition-II}
    Consider a sequence of non-negative numbers $\{p_i\}_{1\leq i\leq M}$ that satisfies 
    \begin{align*}
        \sum_{i=1}^M p_i&\geq R\,,\\
        \max_{1\leq i\leq M} p_i &\leq \frac{R}{100}\,.
    \end{align*}
   Then there exists a partition $[M]=\mathsf C_1\sqcup \mathsf C_2$ such that 
   \[
   \sum_{i\in \mathsf C_1}p_i\geq 0.49R \qquad\text{  and    }\qquad \sum_{i\in \mathsf C_2}p_i\geq 0.49R.
   \]
\end{lemma}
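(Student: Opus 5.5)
We use the same greedy scan as in Lemma~\ref{lem:partition-I}, except that the small maximum now lets us stop near $R/2$ rather than $R/3$. Write $R' := \sum_{i=1}^M p_i \ge R$. We add indices in order $1,2,\dots$ and let $M_0$ be the first index with
\[
\sum_{i=1}^{M_0} p_i \ge 0.49R.
\]
Such an $M_0$ exists because the full sum is $R'\ge R> 0.49R$. (If $R=0$ the claim is trivial, so assume $R>0$.) We then set $\mathsf C_1=[M_0]$ and $\mathsf C_2=[M]\setminus\mathsf C_1$. By construction $\sum_{i\in\mathsf C_1}p_i\ge 0.49R$, so it remains to bound the mass of $\mathsf C_2$ from below.

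By minimality of $M_0$, we have $\sum_{i=1}^{M_0-1}p_i<0.49R$. Adding the single term $p_{M_0}\le R/100$ gives
\[
\sum_{i\in\mathsf C_1}p_i < 0.49R+0.01R = 0.5R.
\]
Therefore
\[
\sum_{i\in\mathsf C_2}p_i = R'-\sum_{i\in\mathsf C_1}p_i > R-0.5R = 0.5R \ge 0.49R,
\]
which is the claim.

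There is no real obstacle here. The only points needing care are the degenerate case $R=0$, and using the total $R'$ rather than $R$ in the complement bound, which only helps since $R'\ge R$. The key step is that the overshoot at the stopping time is at most $\max_i p_i\le R/100$. This is exactly what turns the $R/3$ balance of Lemma~\ref{lem:partition-I} into near-perfect balance $0.49R$.
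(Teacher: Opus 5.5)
Your proposal is correct and follows the paper's proof: the same greedy prefix scan stopped at the first partial sum reaching $0.49R$, with the overshoot controlled by $\max_i p_i \le R/100$, so the prefix has mass at most $0.5R$ and the complement has mass at least $0.49R$. Your explicit handling of $R=0$ and of the total $\sum_i p_i \ge R$ is a small added bit of care that the paper's proof leaves implicit.
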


\begin{proof}
    We add the items one by one until the sum first exceeds $0.49 R$; thus 
    \begin{equation}
        \sum_{i=1}^{M_0}p_i< 0.49R,\quad\sum_{i=1}^{M_0+1}p_i\geq 0.49 R. \notag
    \end{equation}
    Noting that $p_i\leq R/100$, we have
    $\sum_{i=1}^{M_0+1}p_i\leq 0.5 R$. This implies $\sum_{i=M_0+2}^M p_i\geq 0.49R$. Then $\mathsf C_1=[M_0+1]$ and $\mathsf C_2=[M]\setminus \mathsf C_1$ satisfy the desired condition.
\end{proof}

\subsection{Proof of main obstruction theorem}
\label{subsec:proof-main-obstruction-theorem}
We now prove Theorem~\ref{thm:stable-v1}. We omit the proof of Theorem~\ref{thm:k-isolation}, which follows in exactly the same way after replacing the scale $\iota N$ by $k_N$.

\begin{proof}[Proof of Theorem~\ref{thm:stable-v1}]
Let $p_N:=\bbP[\mathcal A_N\text{ locates an $\iota N$-isolated solution of }(\bG,\kappa)]$.
Below we assume without loss of generality that $p_N\geq 0.801$ as otherwise there is nothing to prove.
We couple $(\bG,\widetilde \bG)$ as in \eqref{eq:noise} at noise level $\eta_N$.
Write the success event as (for any $N\times M$ matrix $\bX$)
\[
\mathrm{Iso}(\bX):=\Big\{\mathrm{dist}(\mathcal A_N(\bX),\cS^\circ(\bX,\kappa,\iota))\le \frac{\sqrt{\iota N}}{3}\Big\}.
\]

\smallskip
\noindent\emph{Step 1: choosing a ``good'' instance for which both $\bG$ succeeds and $\widetilde \bG$ succeeds conditionally with probability $\approx p_N$.}
Represent the coupling via a shared Gaussian component:
\[
\bG=\sqrt{1-\lambda_N}\,\bG_1+\sqrt{\lambda_N}\,\bG_2,\qquad 
\widetilde \bG=\sqrt{1-\lambda_N}\,\bG_1+\sqrt{\lambda_N}\,\bG_3,
\]
for i.i.d.\ standard Gaussian matrices $\bG_1,\bG_2,\bG_3$ and $\lambda_N=1-\sqrt{1-\eta_N}$ (recall \eqref{eq:noise}).
Conditioned on $(\bG_1,\omega)$, the randomness used to generate $\bG$ and $\widetilde \bG$ is independent, so by Jensen's inequality,
\begin{equation}
\label{eq:jensen-v1}
\bbP[\mathrm{Iso}(\bG)\cap \mathrm{Iso}(\widetilde \bG)]
=\bbE\Big[\bbP(\mathrm{Iso}(\bG)\mid \bG_1,\omega)^2\Big]
\ge p_N^2,
\end{equation}
where $\omega$ denotes the internal randomness of $\mathcal A_N$ (shared between the two runs, as in the stability definition).
Therefore,
\begin{equation}
\label{eq:cond-iso-lb}
\bbE\big[\bbP(\mathrm{Iso}(\widetilde \bG)\mid \bG,\omega)\,\big|\,\mathrm{Iso}(\bG)\big]
=\frac{\bbP[\mathrm{Iso}(\bG)\cap \mathrm{Iso}(\widetilde \bG)]}{\bbP[\mathrm{Iso}(\bG)]}
\ge p_N.
\end{equation}
Let $\mathcal G_{\mathrm{margin}}$ be the high-probability event from Corollary~\ref{cor:noise-destroys} (with $k=\lfloor \iota N\rfloor$), and let $\mathcal G_{\mathrm{stab}}$ be the event that $\|\mathcal A_N(\bG)-\mathcal A_N(\widetilde \bG)\|_2\le \rho_N$.
We have $\bbP[\mathcal G_{\mathrm{margin}}]=1-o_N(1)$ and $\bbP[\mathcal G_{\mathrm{stab}}]\ge 1-t_N$. Thus, intersecting with $\mathrm{Iso}(\bG)$, we may choose a realization $(\bG,\omega)$ such that:
\begin{enumerate}[label=(\roman*), leftmargin=2.5em]
\item $\mathrm{Iso}(\bG)$ holds;
\item $\mathcal G_{\mathrm{margin}}$ holds;
\item $\bbP(\mathrm{Iso}(\widetilde \bG)\mid \bG,\omega)\ge p_N-o_N(1)$;
\item $\bbP(\mathcal G_{\mathrm{stab}}\mid \bG,\omega)\ge 1-o_N(1)$.
\end{enumerate}
The existence of such $(\bG,\omega)$ is guaranteed by considering the conditional distribution $\bbP(\cdot\,|\,\mathrm{Iso}(\bG))$, under which $\mathcal{G}_{\mathrm{margin}}$ and $\mathcal{G}_{\mathrm{stab}}$ still hold with probability $1-o_N(1)$. For (iii), by \eqref{eq:cond-iso-lb}, we have
\[
\bbP\big(  \bbP(\mathrm{Iso}(\widetilde \bG)\mid \bG,\omega)\ge p_N-\varepsilon_* \,|\,\mathrm{Iso}(\bG) \big) \geq \frac{\bbE[ \mathrm{Iso}(\wt\bG)|\mathrm{Iso}(\bG)] - p_N+\varepsilon_* }{1-p_N+\varepsilon_*}\geq \frac{\varepsilon_*}{1+\varepsilon_*}\,.
\]
The desired existence claim then follows by choosing $\varepsilon_*$ appropriately.

Fix such $(\bG,\omega)$ for the remainder of the proof. Let $\bsigma_\star\in\cS^\circ(\bG,\kappa,\iota)$ be a closest isolated solution, so that
\begin{equation}
\label{eq:sigma-star-close}
\|\mathcal A_N(\bG)-\bsigma_\star\|_2\le \frac{\sqrt{\iota N}}{3}.
\end{equation}

\smallskip
\noindent\emph{Step 2: defining a discrete search region in $\bSigma_N$ under the perturbed instance.}
Define
\[
\mathsf{CAND}_\kappa:=\{\btau\in\bSigma_N:\ \|\btau-\mathcal A_N(\bG)\|_2 \le \frac{\sqrt{\iota N}}{3}+\rho_N\}.
\]
On the event $\mathcal G_{\mathrm{stab}}\cap \mathrm{Iso}(\widetilde \bG)$, there exists $\widetilde\bsigma\in\cS^\circ(\widetilde \bG,\kappa,\iota)$ with
\[
\|\widetilde\bsigma-\mathcal A_N(\widetilde \bG)\|_2\le \frac{\sqrt{\iota N}}{3},
\qquad
\|\mathcal A_N(\widetilde \bG)-\mathcal A_N(\bG)\|_2\le \rho_N,
\]
hence $\widetilde\bsigma\in \mathsf{CAND}_\kappa$.
Moreover, since $\rho_N=o(\sqrt N)$, 
its Hamming diameter is $\mathrm{diam}_{\dH}(\mathsf{CAND}_\kappa)<\iota N$.
Thus, if $\mathsf{CAND}_\kappa$ contains any $\iota N$-isolated solution of $(\wt\bG,\kappa)$, it contains \emph{at most one} solution of $\cS(\widetilde \bG,\kappa)$.

Let
\[
S:=\bbP\big[|\cS(\widetilde \bG,\kappa)\cap \mathsf{CAND}_\kappa|=1\,\big|\,\bG,\omega\big].
\]
From item (iii) and the preceding discussion, we have
\[
S \ge \bbP(\mathrm{Iso}(\widetilde \bG)\cap \mathcal G_{\mathrm{stab}}\mid \bG,\omega)
\ge p_N - o_N(1).
\]
For each $\btau\in \mathsf{CAND}_\kappa$, define
\[
p_\btau := \bbP\big[\btau\in\cS(\widetilde \bG,\kappa)\ \text{and}\ |\cS(\widetilde \bG,\kappa)\cap \mathsf{CAND}_\kappa|=1\,\big|\,\bG,\omega\big].
\]
The events defining $p_\btau$ are disjoint across $\btau\in \mathsf{CAND}_\kappa$, so
\begin{equation}
\label{eq:sum-p-1}
\sum_{\btau\in \mathsf{CAND}_\kappa} p_\btau = S.
\end{equation}

Finally, we claim $\max_{\btau\in \mathsf{CAND}_\kappa} p_\btau\le \tfrac12+o_N(1)$.
Indeed, $p_\btau\le \bbP[\btau\in\cS(\widetilde \bG,\kappa)\mid \bG]$.
Also, by \eqref{eq:sigma-star-close} and the definition of $\mathsf{CAND}_\kappa$,
for any $\btau\in \mathsf{CAND}_\kappa$ we have
\[
\|\btau-\bsigma_\star\|_2 \le \|\btau-\mathcal A_N(\bG)\|_2+\|\mathcal A_N(\bG)-\bsigma_\star\|_2 \le \frac{2\sqrt{\iota N}}{3}+\rho_N,
\]
hence
\[
\dH(\btau,\bsigma_\star)=\frac{\|\btau-\bsigma_\star\|_2^2}{4}\le \Big(\frac{2}{3}+o_N(1)\Big)^2\frac{\iota N}{4}< \iota N
\]
for large $N$.
Therefore $\mathsf{CAND}_\kappa\subseteq \Ball_{\iota N,\mathrm{H}}(\bsigma_\star)$, and Corollary~\ref{cor:noise-destroys} implies
\begin{equation}
\label{eq:max-p-1}
\max_{\btau\in \mathsf{CAND}_\kappa} p_\btau \le \tfrac12+o_N(1).
\end{equation}

\smallskip
\noindent\emph{Step 3: partitioning the search region and deriving the contradiction.}
Apply Lemma~\ref{lem:partition-I} to the numbers $\{p_\btau:\btau\in \mathsf{CAND}_\kappa\}$.
Using \eqref{eq:sum-p-1}--\eqref{eq:max-p-1}, we obtain a partition $\mathsf{CAND}_\kappa=\mathsf C_1\sqcup \mathsf C_2$ such that
\begin{equation}
\label{eq:partition-sums}
\begin{split}
    & \sum_{\btau\in \mathsf{C}_i} p_\btau \ge \frac{S}{3},\qquad i=1,2.\\
    & \Big(\sum_{\btau\in \mathsf{C}_1} p_\btau \Big)
    \cdot 
    \Big(\sum_{\btau\in \mathsf{C}_2} p_\btau \Big) \geq \frac{2S^2}{9}.
\end{split}
\end{equation}
Define the events
\[
E_i := \{\cS(\widetilde \bG,\kappa)\cap \mathsf{C}_i\ne\emptyset\},\qquad i=1,2.
\]
Then $E_i$ contains the disjoint union of events counted by $p_\btau$ for $\btau\in \mathsf{C}_i$, so
\[
\bbP[E_i\mid \bG,\omega]\ge \sum_{\btau\in \mathsf{C}_i} p_\btau .
\]

We now apply Proposition~\ref{prop:pitt} conditionally on $(\bG,\omega)$.
Consider the Gaussian vector consisting of all constraint values
\[
X_{\tau,a} := \frac{\la \widetilde \bg^a,\btau\ra}{\sqrt{N}},\qquad \btau\in \mathsf{CAND}_\kappa,\ 1\le a\le M.
\]
Conditional on $\bG$, the random part of $X_{\btau,a}$ comes from $\bG'$, and one checks that
\[
\mathrm{Cov}(X_{\btau,a},X_{\btau',a'}) = \eta_N\,q(\btau,\btau')\,\ind\{a=a'\},
\qquad q(\btau,\btau'):=\frac1N\sum_{i=1}^N \btau_i\btau'_i = 1-\frac{2\dH(\btau,\btau')}{N}.
\]
Since $\mathsf{CAND}_\kappa$ has Hamming diameter smaller than $\iota N/3$, for large $N$ we have $q(\btau,\btau')\ge 0$ for all $\btau,\btau'\in \mathsf{CAND}_\kappa$.
Moreover, each $E_i$ is coordinate-wise increasing in the collection $(X_{\btau,a})$.
Therefore, by Pitt's inequality (recall \eqref{eq:partition-sums}),
\[
\bbP[E_1\cap E_2\mid \bG,\omega] \ge \bbP[E_1\mid \bG,\omega]\,\bbP[E_2\mid \bG,\omega]
\ge \frac{2S^2}{9}.
\]
On the other hand, $E_1\cap E_2$ implies that $\cS(\widetilde \bG,\kappa)\cap \mathsf{CAND}_\kappa$ contains at least two solutions, hence
\[
\bbP[E_1\cap E_2\mid \bG,\omega] \le \bbP\big[|\cS(\widetilde \bG,\kappa)\cap \mathsf{CAND}_\kappa|\ge 2\,\big|\,\bG,\omega\big]\le 1-S.
\]
Combining yields
\[
1-S \ge \frac{2S^2}{9}.
\]
Since $S\ge p_N-o_N(1)$, this implies
\[
1-p_N \ge \frac{2p_N^2}{9}+o_N(1),
\]
which forces $p_N\le \frac{3\sqrt{17}-9}{4}+o_N(1)$.
\end{proof}

\subsection{High-success algorithms}
Here we prove Theorem~\ref{thm:stable-v2}, which uses a similar idea. 
\begin{proof}[Proof of Theorem~\ref{thm:stable-v2}]
    By Markov's inequality, with probability at least $1-\delta^{1/2}$ over the randomness of $(\bG,\omega)$, 
    \begin{equation}\label{eq-prob-locate-solution}
    \mathbb P[ \mathcal{A}\text{ $\iota N$-locates a solution of $\wt\bG$}\,|\, \bG,\omega  ] \geq 1-\delta^{1/2}\,.
    \end{equation}
    We denote the above event by $\mathrm{Loc}(\wt\bG)$.
    Recall that $p_N:=\bbP(\mathcal{A}_N\text{ locates an $\iota N$-isolated solution of $(\bG,\kappa)$})$. Similarly to \eqref{eq:jensen-v1}, we have
    \begin{equation}
        \bbP\big(\mathrm{Iso}(\bG)\cap\mathrm{Iso}(\wt\bG)\big) \geq p_N^2.\notag
    \end{equation}
    This implies with probability at least $p_N^2/2$ over $(\bG,\omega)$,
    \begin{equation}\label{eq:prob-locate-singleton}
    \bbP\big( \mathrm{Iso}(\bG)\cap\mathrm{Iso}(\wt\bG)\,|\,\bG,\omega \big) \geq p_N^2/2\,.
    \end{equation}
    If $p_N$ satisfies $p_N^2/2+1-\delta^{1/2} > 1$ (for instance, it suffices to take $p_N\geq 2\delta^{1/4}$), we can choose a realization $(\bG,\omega)$ such that:
    \begin{enumerate}[label=(\roman*), leftmargin=2.5em]
        \item $\mathcal{G}_{\mathrm{margin}}$ holds,
        \item $\mathbb P(\mathrm{Loc}(\wt\bG)\,|\,\bG,\omega)\geq 1-\delta^{1/2}$,
        \item $\mathbb P(\mathcal{G}_{\mathrm{stab}}\,|\,\bG,\omega)\geq 1-o_N(1)$,
        \item $\mathbb P(\mathrm{Iso}(\wt\bG)\cap \mathrm{Iso}(\bG)\,|\,\bG,\omega)\geq p_N^2/2$.
    \end{enumerate}
    We fix such $(\bG,\omega)$ for the remainder of the proof. Let $\bsigma_\star\in \cS^\circ (\bG,\kappa,\iota)$ be a closest isolated solution, so that
    \begin{equation}
    \label{eq:sigma-star-close-2}
    \|\mathcal A_N(\bG)-\bsigma_\star\|_2\leq \frac{\sqrt{\iota N}}{3}\,.
    \end{equation}
    We define the ball around $\mathcal{A}_N(\bG)$ as
    \[
    \mathsf{CAND}_\kappa:=\{\btau\in \bSigma_N:\,\|\btau-\mathcal A_N(\bG)\|_2\leq \frac{\sqrt{\iota N}}{3}+\rho_N\}\,.
    \]
    On the event $\mathcal G_{\mathrm{stab}}\cap \mathrm{Iso}(\widetilde \bG)$, there exists $\widetilde\bsigma\in\cS^\circ(\widetilde \bG,\kappa,\iota)$ with
\[
\|\widetilde\bsigma-\mathcal A_N(\widetilde \bG)\|_2\le \frac{\sqrt{\iota N}}{3},
\qquad
\|\mathcal A_N(\widetilde \bG)-\mathcal A_N(\bG)\|_2\le \rho_N,
\]
hence $\widetilde\bsigma\in \mathsf{CAND}_\kappa$.
Moreover, since $\rho_N=o(\sqrt N)$, 
we have $\mathrm{diam}_{\dH}(\mathsf{CAND}_\kappa)<\iota N$.
Thus, if $\mathsf{CAND}_\kappa$ contains any $\iota N$-isolated solution, it contains \emph{at most one} solution of $\cS(\widetilde \bG,\kappa)$.

Let
\[
S:=\bbP\big[|\cS(\widetilde \bG,\kappa)\cap \mathsf{CAND}_\kappa|=1\,\big|\,\bG,\omega\big].
\]
From Items (iii),(iv) and the preceding implication, we have
\begin{equation}\label{eq:lower-boundS}
S \ge \bbP(\mathrm{Iso}(\widetilde \bG)\cap \mathcal G_{\mathrm{stab}}\mid \bG,\omega)
\ge p_N^2/2 - o_N(1).
\end{equation}
For each $\btau\in \mathsf{CAND}_\kappa$, define
\[
p_\btau := \bbP\big[\btau\in\cS(\widetilde \bG,\kappa)\ \text{and}\ |\cS(\widetilde \bG,\kappa)\cap \mathsf{CAND}_\kappa|=1\,\big|\,\bG,\omega\big].
\]
The events defining $p_\btau$ are disjoint across $\btau\in \mathsf{CAND}_\kappa$, and recall \eqref{eq:lower-boundS}. We have
\begin{equation}
\label{eq:sum-p}
\sum_{\btau\in \mathsf{CAND}_\kappa} p_\btau = S \geq p_N^2/2 - o_N(1).
\end{equation}

Finally, we claim $\max_{\btau\in \mathsf{CAND}_\kappa} p_\btau\le \tfrac12+o_N(1)$.
Indeed, $p_\btau\le \bbP[\btau\in\cS(\widetilde \bG,\kappa)\mid \bG]$.
Also, by \eqref{eq:sigma-star-close-2} and the definition of $\mathsf{CAND}_\kappa$,
for any $\btau\in \mathsf{CAND}_\kappa$ we have
\[
\|\btau-\bsigma_\star\|_2 \le \|\btau-\mathcal A_N(\bG)\|_2+\|\mathcal A_N(\bG)-\bsigma_\star\|_2 \le \frac{2\sqrt{\iota N}}{3}+\rho_N,
\]
implying that $\dH(\btau,\bsigma_\star)\leq \iota N$ for large $N$.
Therefore $\mathsf{CAND}_\kappa\subseteq \Ball_{\iota N,\mathrm{H}}(\bsigma_\star)$, and Corollary~\ref{cor:noise-destroys} implies
\begin{equation}
\label{eq:max-p}
\max_{\btau\in \mathsf{CAND}_\kappa} p_\btau \le \tfrac12+o_N(1).
\end{equation}
Now we claim: 
\begin{equation}\label{eq:claim-prob-no-solution}
    \bbP(\mathsf{CAND}_\kappa \text{ contains no solution of $\wt\bG$}\,|\,\bG,\omega) \geq \max\Big\{\frac{p_N^2}{500}, \frac{p_N^4}{40}\Big\}.
\end{equation}
To show \eqref{eq:claim-prob-no-solution} we partition the set $\mathsf{CAND}_\kappa$ into disjoint parts $\mathsf C_1,\mathsf C_2$ such that each part contains no solution with certain probability and then apply Pitt's inequality. 

We consider two cases. First, suppose there exists a $\btau\in \mathsf{CAND}_\kappa$ such that $p_\btau \geq p_N^2/200$. Then we can pick $\mathsf C_1=\{\btau\}$ and $\mathsf C_2=\mathsf{CAND}_\kappa\setminus \mathsf C_1$.
Noting that $\sum_{\btau\in \mathsf C_i}p_\btau$ lower-bounds the probability that the other side contains no solution, not the same side, we obtain:
\begin{align*}
    &\mathbb P(\mathsf C_1\text{ contains no solution}\,|\,\bG,\omega) = \mathbb P(\btau\notin \cS(\wt\bG,\kappa)\,|\,\bG,\omega) \geq 0.49,\\
    &\mathbb P(\mathsf C_2\text{ contains no solution}\,|\,\bG,\omega) \geq p_\btau \geq p_N^2/200,
\end{align*}
where the first inequality follows from the definition of $\cG_{\mathrm{margin}}$. The first item of right-hand side of \eqref{eq:claim-prob-no-solution} follows by applying Pitt's inequality (Proposition~\ref{prop:pitt}). 

Otherwise if $p_\btau \leq \frac{p_N^2}{200}$ for all $\btau\in \mathsf{CAND}_\kappa$, by Lemma~\ref{lem:partition-II}, there exists a partition $\mathsf{CAND}_\kappa=\mathsf C_1\sqcup \mathsf C_2$ with
\begin{equation*}
    \sum_{\btau\in \mathsf{C}_i} p_\btau \geq \frac{p_N^2}{5}\,,\text{ for }i=1,2.
\end{equation*}
This means that
\[
\mathbb P(\mathsf{C}_i\text{ contains no solution}\,|\,\bG,\omega) \geq \frac{p_N^2}{5}\,\text{ for } i=1,2.
\]
Then the second item of \eqref{eq:claim-prob-no-solution} is from Pitt's inequality. Recalling the definition of $\mathrm{Loc}(\wt\bG)$, and using the stability event \(\mathcal{G}_{\mathrm{stab}}\), we have
\begin{equation}
\label{eq:claim-prob-no-solution-upperbound}
    \bbP(\mathsf{CAND}_\kappa \text{ contains no solution of $\wt\bG$}\,|\,\bG,\omega)\leq \delta^{1/2}+o_N(1)\,.
\end{equation}
Combining \eqref{eq:claim-prob-no-solution} and \eqref{eq:claim-prob-no-solution-upperbound} yields
\[
p_N\leq \min\{ \sqrt{500}\delta^{\frac{1}{4}}, 40^{\frac{1}{4}}\delta^{\frac{1}{8}} \} + o_N(1).
\]
The proof is complete.
\end{proof}

\subsection{Extensions to general perceptrons}
We now prove the corresponding statements for perceptron models defined by a finite union of intervals.

Before we start the proof, we first define \emph{margin} for general perceptrons. For simplicity, we write $\mathcal{U}=\sqcup_{i=1}^L \mathcal{I}_i$, where we denote $\mathcal{I}_i=[a_i,b_i]$ and the $a_i$'s and $b_i$'s satisfy $a_1<b_1<a_2<b_2<\dots<
a_L<b_L$. 
For each constraint $a$, define its margin under $\bsigma$ by
\[
m_a(\bsigma):=\mathrm{dist}\!\left( \frac{\langle \bg^a,\bsigma\rangle}{\sqrt{N}},\; \mathbb{R}\setminus \mathcal{U}\right).
\]
Equivalently, if $ \langle \bg^a, \bsigma \rangle/\sqrt{N} \in [a_i,b_i]$ for some $i$, then
\[
m_a(\bsigma)=\min\left\{ \frac{\langle \bg^a,\bsigma\rangle}{\sqrt{N}}-a_i,\; b_i- \frac{\langle \bg^a,\bsigma\rangle}{\sqrt{N}}\right\}.
\]
The margin of the solution $\bsigma$ is then defined by
\begin{equation}\label{eq:margin-SBP}
    m(\bsigma):=\min_{1\leq a \leq M} m_a(\bsigma).
\end{equation}
Similarly to the proof of Proposition~\ref{prop:isolation-implies-low-margin}, we have that with probability $1-o_N(1)$, 
\begin{equation}\label{eq:margin-singleton-SBP}
    m(\bsig) = O\left(\sqrt{\frac{\log N}{N}}\right) \text{ uniformly over all }\bsig\in \bigcup_{k\geq 1}\mathcal{S}^\circ_k(\bG,\mathcal{U}).
\end{equation}
Still, we consider $(\bG,\wt\bG)$ as in \eqref{eq:noise}. Similarly to Corollary~\ref{cor:noise-destroys}, we can show that with probability $1-o_N(1)$ over $\bG$, the following holds: for every $\bsig\in\mathcal{S}^\circ_k(\bG,\mathcal{U})$ and every $\btau\in \mathsf B_{k,H}(\bsig)$,
\begin{equation}\label{eq:wtbG-sol-upper-bound-SBP}
    \mathbb P [ \btau\in\mathcal{S}(\wt\bG,\mathcal{U}) \,|\, \bG ] \leq \frac{1}{2}+o_N(1)\,.
\end{equation}

\begin{proof}[Proof of Theorem~\ref{thm:stable-SBP}]
Let \[p_{N,\mathcal{U}}:=\mathbb P[\mathcal{A}_N \text{ $k_N$-locates a $k_N$-isolated solution of }(\bG,\mathcal{U})].\] 
We couple $(\bG,\wt\bG)$ as in \eqref{eq:noise} and let
\[
\mathrm{Iso}(\bH,\mathcal{U}):=\Big\{\mathrm{dist}(\mathcal{A}_N(\bH),\mathcal{S}^\circ_{k_N}(\bH,\mathcal{U}))\leq \frac{\sqrt{k_N}}{3}\Big\}.
\]
Similarly, we let $\mathcal{G}_{\mathrm{stab},\mathcal{U}}$ be the event that $\|\mathcal{A}_N(\bG)-\mathcal{A}_N(\wt\bG)\|_2\leq \rho_N$, and let $\mathcal{G}_{\mathrm{margin},\mathcal{U}}$ be the event in \eqref{eq:wtbG-sol-upper-bound-SBP}, which is measurable w.r.t. $\bG$. Steps 1-2 in the proof of Theorem~\ref{thm:stable-v1} carry over verbatim here. Thus, we can choose $(\bG,\omega)$ where $\omega$ is the internal randomness of $\mathcal{A}_N$ such that:
\begin{itemize}
    \item $\mathrm{Iso}(\bG,\mathcal{U})$ holds;
    \item $\mathcal{G}_{\mathrm{margin},\mathcal{U}}$ holds;
    \item $\mathbb P[\mathrm{Iso}(\wt\bG,\mathcal{U})\,|\,\bG,\omega]\geq p_{N,\mathcal{U}}-o_N(1)$;
    \item $\mathbb P[ \mathcal{G}_{\mathrm{stab},\mathcal{U}}\,|\,\bG,\omega ]\geq 1-o_N(1)$;
    \item $\|\bG\|_\infty\leq 10\sqrt{\log N}$.
\end{itemize}
We fix such $(\bG,\omega)$ for the remainder of the proof. Let $\bsig_* \in \mathcal{S}^\circ_{k_N}(\bG,\mathcal{U})$ be the closest solution of $\bG$ that $\mathcal{A}_N(\bG)$ locates, i.e.
\begin{equation}
    \|\mathcal{A}_N(\bG)-\bsig_*\|_2\leq \frac{\sqrt{k_N}}{3}.
\end{equation}
We define 
\[
\mathsf{CAND}_{\mathcal U}:=\{ \btau\in \bSig_N:\,\|\btau-\mathcal{A}_N(\bG)\|_2\leq \frac{\sqrt{k_N}}{3}+\rho_N \}.
\]
Let $S_\mathcal{U}:=\mathbb P[|\mathcal{S}(\wt\bG,\mathcal U)\cap \mathsf{CAND}_{\mathcal U}|=1\,|\,\bG,\omega]$. For each $\btau\in \mathsf{CAND}_{\mathcal U}$, define
\[
p_{\btau,\mathcal{U}}:=\mathbb P[\btau\in\mathcal{S}(\wt\bG,\mathcal{U})\text{ and }|\mathcal{S}(\wt\bG,\mathcal U)\cap \mathsf{CAND}_{\mathcal U}|=1 \, | \, \bG,\omega]\,.
\]
Similarly as in step~2, we have
\begin{equation}
\begin{split}
   & \sum_{\btau\in \mathsf{CAND}_{\mathcal U}} p_{\btau,\mathcal{U}} =S_\mathcal{U}\geq p_{N,\mathcal{U}}-o_N(1)\,.\\
   & \max_{\btau\in \mathsf{CAND}_{\mathcal U}}p_{\btau,\mathcal{U}}\leq \frac{1}{2}+o_N(1).
\end{split}
\end{equation}
Again by Lemma~\ref{lem:partition-I}, there exists a partition $\mathsf{CAND}_{\mathcal U}=\mathsf{C}_{1,\mathcal{U}}\sqcup \mathsf{C}_{2,\mathcal{U}}$ such that
\begin{equation}\label{eq:partition-sum-SBP}
\begin{split}
    & \sum_{\btau\in \mathsf{C}_{i,\mathcal{U}}} p_{\btau,\mathcal{U}} \geq \frac{S_\mathcal{U}}{3},\qquad i=1,2\,,\\
    & \left( \sum_{\btau\in \mathsf{C}_{1,\mathcal{U}}} p_{\btau,\mathcal{U}} \right) \left( \sum_{\btau\in \mathsf{C}_{2,\mathcal{U}}} p_{\btau,\mathcal{U}} \right) \geq \frac{2S_\mathcal{U}^2}{9}\,.
\end{split}
\end{equation}
Defining the events $E_{i,\mathcal{U}}=\{\mathcal{S}(\wt\bG,\mathcal{U})\cap \mathsf{C}_{i,\mathcal{U}}\neq \emptyset\}$, we have
\begin{equation}\label{eq:def-Eiu}
    \mathbb P[E_{i,\mathcal{U}}\,|\,\bG,\omega]\geq \sum_{\btau\in \mathsf{C}_{i,\mathcal{U}}} p_{\btau,\mathcal{U}}\qquad i=1,2.
\end{equation}
We cannot apply Pitt's inequality directly to the events $E_{i,\mathcal{U}}$, since in contrast to the one-sided case, the constraints are no longer monotone. Nevertheless, because the diameter of $\mathsf{CAND}_{\mathcal U}$ is small, only one side of each constraint is effectively active, allowing us to recover the monotonicity required for Pitt's inequality. We make this precise below.

Recall that $\bsig_*\in \mathcal{S}^\circ_{k_N}(\bG,\mathcal{U})$ is the $k_N$-isolated solution located by $\mathcal{A}_N(\bG)$ and that $\bsig_*\in \mathsf{CAND}_{\mathcal U}$. We define the following sets of \emph{relevant} constraints that are close to the boundary
\begin{align}
    & \mathsf{Rel}^-:=\left\{1\leq c\leq M:\,m_c(\bsig_*)\leq \frac{5}{\log N}\,, \exists  1\leq i_c\leq L, \frac{\langle \bg^c,\bsig_*\rangle}{\sqrt{N}}-a_{i_c}\leq \frac{5}{\log N}\right\}\,,\\
    & \mathsf{Rel}^+:=\left\{1\leq c\leq M:\,m_c(\bsig_*)\leq \frac{5}{\log N}\,, \exists  1\leq i_c\leq L, b_{i_c}-\frac{\langle \bg^c,\bsig_*\rangle}{\sqrt{N}}\leq \frac{5}{\log N}\right\}\,.
\end{align}

Since $\|\bG\|_\infty \leq 10\sqrt{\log N}$ and $\mathrm{diam}_{d_\mathrm H}(\mathsf{CAND}_{\mathcal U})\leq \frac{\sqrt{N}}{(\log N)^2}$, we have that for all $\btau\in \mathsf{CAND}_{\mathcal U}$, 
\begin{equation}\label{eq:condition-bG-eff-side}
\begin{split}
    & |\frac{\langle \bg^c,\btau\rangle}{\sqrt{N}}-a_{i_c}|\leq 6/\log N \text{ for } c\in\mathsf{Rel}^-\,,\\
    &|\frac{\langle \bg^c,\btau\rangle}{\sqrt{N}}-b_{i_c}|\leq 6/\log N \text{ for } c\in\mathsf{Rel}^+.
\end{split}
\end{equation}
This means all $\btau\in \mathsf{CAND}_{\mathcal U}$ are targeting the same boundary. 
Recall the definition of $E_{i,\mathcal{U}}$ above \eqref{eq:def-Eiu}, we define $\wt E_{i,\mathcal{U}}$ to be the event that there exists a $\btau\in \mathsf{C}_{i,\mathcal{U}}$, such that 
\begin{equation}
\begin{split}
    & \frac{\langle \wt\bg^c, \btau\rangle}{\sqrt{N}} \geq a_{i_c} \text{ for }c\in \mathsf{Rel}^{-}\,,\\
    & \frac{\langle \wt\bg^c, \btau\rangle}{\sqrt{N}} \leq b_{i_c} \text{ for }c\in \mathsf{Rel}^{+}.
\end{split}
\end{equation}
Note that $\wt\bg^c=\sqrt{1-\eta_N}\bg^c + \sqrt{\eta_N}\bg'^c$. For each $1\leq c\leq M$ and $\btau\in \mathsf{CAND}_{\mathcal U}$, we have
\begin{equation}
\mathbb P\left[ \sqrt{\eta_N}\frac{|\langle \bg'^c,\btau\rangle|}{\sqrt{N}} \geq \frac{1}{\log N} \right] \leq \exp(-N^{1-o(1)}).
\end{equation}
In addition, $|\mathsf{CAND}_{\mathcal U}\cap \bSig_N|=\exp(N^{1/2-o(1)})$. This implies 
\begin{equation}
    \mathbb P\left[ \sqrt{\eta_N}\frac{|\langle \bg'^c,\btau\rangle|}{\sqrt{N}} \leq \frac{1}{\log N}\text{ for all }\btau\in \mathsf{CAND}_{\mathcal U} \text{ and }1\leq c\leq M\right] = 1-o(1)\,.
\end{equation}
We denote this event by $\mathcal{G}'$. Recall that we have conditioned on $\bG$ and it satisfies \eqref{eq:condition-bG-eff-side}. Under $\mathcal{G'}\cap \wt E_{i,\mathcal{U}}$, we have that there exists a $\btau\in \mathsf{CAND}_{\mathcal U}$ such that 
\begin{equation*}
    \begin{split}
        &  a_{i_c} \leq \frac{\langle \wt\bg^c, \btau\rangle}{\sqrt{N}}\leq a_{i_c}+\frac{7}{\log N}<b_{i_c}  \text{ for }c\in \mathsf{Rel}^{-}\,,\\
    & a_{i_c}<b_{i_c}-\frac{7}{\log N}<\frac{\langle \wt\bg^c, \btau\rangle}{\sqrt{N}} \leq b_{i_c} \text{ for }c\in \mathsf{Rel}^{+}\,,\\
    & \exists 1\leq i_c\leq L, a_{i_c}<\frac{\langle \wt\bg^c, \btau\rangle}{\sqrt{N}}<b_{i_c} \text{ for }c\in [M]\setminus (\mathsf{Rel}^{-}\cup\mathsf{Rel}^{+}).
    \end{split}
\end{equation*}
 Therefore, under such $\bG$ we obtain $\mathcal{G'}\cap \wt E_{i,\mathcal{U}} \subset E_{i,\mathcal{U}}$ for $i=1,2$. Moreover, we have $\mathcal{G'}\cap  E_{i,\mathcal{U}} \subset \wt E_{i,\mathcal{U}}$ by \eqref{eq:condition-bG-eff-side}. This further implies $\mathcal{G'}\cap \wt E_{i,\mathcal{U}} = E_{i,\mathcal{U}} \cap \mathcal{G}'$.

Since the collections $\{\bg'^c\}_c$ are independent across different $c$, we may apply Pitt's inequality to the variables $\langle \bg'^c,\btau\rangle$ for $c\in\mathsf{Rel}^-$ and $\langle -\bg'^c,\btau\rangle$ for $c\in\mathsf{Rel}^+$. It follows that
\begin{equation}\label{eq:intermediate-FKG-SBP}
    \mathbb P[ \wt E_{1,\mathcal{U}}\cap  \wt E_{2,\mathcal{U}}\,|\,\bG,\omega] \geq \mathbb P[ \wt E_{1,\mathcal{U}}\,|\,\bG,\omega]\mathbb P[ \wt E_{2,\mathcal{U}}\,|\,\bG,\omega].
\end{equation}
Note that 
\begin{equation}\label{eq:intermediate-FKG-SBP-1}
    \mathbb P[ \wt E_{i,\mathcal{U}}\,|\,\bG,\omega] \geq \mathbb P[ \wt E_{i,\mathcal{U}}\cap \mathcal{G}'\,|\,\bG,\omega] = \mathbb P[  E_{i,\mathcal{U}}\cap \mathcal{G}'\,|\,\bG,\omega] \geq \mathbb P[  E_{i,\mathcal{U}}\,|\,\bG,\omega] - o_N(1).
\end{equation}
and 
\begin{equation}\label{eq:intermediate-FKG-SBP-2}
    \mathbb P[ \wt E_{1,\mathcal{U}}\cap  \wt E_{2,\mathcal{U}}\,|\,\bG,\omega]\leq \mathbb P[ \wt E_{1,\mathcal{U}}\cap  \wt E_{2,\mathcal{U}}\cap \mathcal{G}'\,|\,\bG,\omega] + o_N(1)\leq \mathbb P[  E_{1,\mathcal{U}}\cap   E_{2,\mathcal{U}}\,|\,\bG,\omega] + o_N(1).
\end{equation}
Combining \eqref{eq:intermediate-FKG-SBP},\eqref{eq:intermediate-FKG-SBP-1} and \eqref{eq:intermediate-FKG-SBP-2} and recalling $S_\mathcal{U}\geq p_{N,\mathcal{U}}-o_N(1)$ and \eqref{eq:partition-sum-SBP}, \eqref{eq:def-Eiu}, we obtain
\begin{equation}
    1-p_{N,\mathcal{U}}\geq \frac{2p^2_{N,\mathcal{U}}}{9}-o_N(1).
\end{equation}
Solving this inequality yields $p_{N,\mathcal{U}}\leq \frac{3\sqrt{17}-9}{4}+o_N(1)$. The proof is complete.
\end{proof}

\section{Discussion}\label{sec:discuss}

We conclude this paper with some directions for future work. 
Strengthening the algorithmic success probability ruled out by Theorem~\ref{thm:stable-v1} is to us the most natural and interesting open problem.
\begin{problem}
    Show that stable algorithms cannot locate an isolated solution (in the sense of Theorem~\ref{thm:stable-v1}) with probability more than $o_N(1)$.
\end{problem}
We expect that achieving this will require new ideas. 
Indeed, a natural intuitive picture in the setting of Theorem~\ref{thm:stable-v1} would predict that after rerandomizing $\bG$ slightly, the number of solutions near the algorithm output is approximately a Poisson variable (since each point is unlikely to be a solution). A Poisson variable of course cannot concentrate on $1$. However, it can certainly place constant mass on $1$ (in fact as much as $1/e$), so such a heuristic will not bound the success probability by $o_N(1)$.

The next problem pertains to Theorem~\ref{thm:stable-SBP}.
\begin{problem}
    Show that in the generalized perceptron model considered in Theorem~\ref{thm:stable-SBP}, stable algorithms cannot $k_N$-locate a $k_N$-isolated solution with probability more than $1-\Omega(1)$, for some $k_N = \tilde\Omega(N)$.
\end{problem}
Combined with Corollary~\ref{cor:ld-stable}, such a result would imply that degree $\tilde O(N)$ algorithms cannot $k_N$-locate a $k_N$-isolated solution.
The assumption $k_N = \tilde O(\sqrt{N})$ is required in the proof of Theorem~\ref{thm:stable-SBP} to ensure that, when the perceptron constraints are defined by a finite union of intervals, at most one boundary constraint is active for all solutions in a ball of radius $k_N$; see \eqref{eq:condition-bG-eff-side}.
This step is the main obstacle to improving the bound to $k_N = \tilde\Omega(N)$.

Finally, we ask if a similar hardness result can be proven for the spherical perceptron.
Here one should interpret ``isolated solution'' to mean a solution cluster of small diameter, which is separated from the rest of the solution space by a significantly larger distance.

\begin{problem}
    Determine whether stable algorithms are able to locate a (suitably defined) isolated solution in a spherical perceptron model.
\end{problem}

\section*{Acknowledgment}
We thank Will Perkins, David Gamarnik and Yatin Dandi for bringing this problem to our attention and for helpful initial discussions. S. Gong is partially supported by National Key R\&D program of China (Project No. 2023YFA1010103) and the NSFC Key Program (Project No. 12231002).
B. Huang is supported by the Stanford Science Fellowship and the NSF Mathematical Sciences Postdoctoral Research Fellowship. 

 \footnotesize
 \bibliographystyle{alphaabbr}
\bibliography{bib}
\end{document}